\def\mdseries@tt{m}             %1
\pgfplotsset{compat=newest}
\title[Tractable mechanisms for computing near-optimal utility functions]{Tractable mechanisms for computing\\near-optimal utility functions}
\author{Rahul Chandan}
\affiliation{
  \institution{UC Santa Barbara}
  \city{Santa Barbara}
  \state{California}}
\email{rchandan@ucsb.edu}
\author{Dario Paccagnan}
\affiliation{
  \institution{Imperial College, London}
  \city{London, United Kingdom}}
\email{d.paccagnan@imperial.ac.uk}
\author{Jason R. Marden}
\affiliation{
  \institution{UC Santa Barbara}
  \city{Santa Barbara}
  \state{California}}
\email{jrmarden@ece.ucsb.edu}
\begin{abstract}
Large scale multiagent systems must rely on distributed decision making, as centralized coordination is either impractical or impossible. Recent works approach this problem under a game theoretic lens, whereby utility functions are assigned to each of the agents with the hope that their local optimization approximates the centralized optimal solution. Yet, formal guarantees on the resulting performance cannot be obtained for broad classes of problems without compromising on their accuracy. In this work, we address this concern relative to the well-studied problem of resource allocation with nondecreasing concave welfare functions. We show that optimally designed local utilities achieve an approximation ratio (price of anarchy) of $1-c/e$, where $c$ is the function's curvature and $e$ is Euler's constant. The upshot of our contributions is the design of approximation algorithms that are distributed and efficient, and whose performance matches that of the best existing polynomial-time (and centralized) schemes.  
\end{abstract}
\keywords{distributed submodular maximization, approximation ratio, price of anarchy, game theory, resource allocation}
\newcommand{\BibTeX}{\rm B\kern-.05em{\sc i\kern-.025em b}\kern-.08em\TeX}
\theoremstyle{plain}
\newtheorem{theorem}{Theorem}
\newtheorem{lemma}{Lemma}
\newtheorem{proposition}{Proposition}
\newtheorem{corollary}{Corollary}
\theoremstyle{definition}
\newtheorem{definition}[]{Definition}
\theoremstyle{remark}
\newtheorem{example}[]{Example}
\newcommand{\mc}[1]{\mathcal{#1}}
\newcommand{\mb}[1]{\mathbb{#1}}
\newcommand{\opt}[1]{{#1}^\mathrm{opt}}
\newcommand{\nash}[1]{{#1}^\mathrm{ne}}
\newcommand{\reals}{\mb{R}}
\newcommand{\naturals}{\mb{N}}
\DeclareMathOperator*{\argmax}{arg\,max}
\newcommand{\poa}{{\rm PoA}}
\begin{document}

%%% The following commands remove the headers in your paper. For final 
%%% papers, these will be inserted during the pagination process.

\pagestyle{fancy}
\fancyhead{}

%%% The next command prints the information defined in the preamble.

\maketitle

\section{Introduction}

The study of distributed control in multiagent systems has gained popularity over the
past few decades as it has become apparent that the behaviour of local decision makers 
impacts the performance of many social and technological systems.
Consider the typical example of selfish drivers on a road network.
Counterintuitively, if all drivers make route selections that minimize their own travel times,
the average time each driver spends on the road can be much higher than optimal~\cite{ccolak2016understanding}.
As an alternative example, DARPA's Blackjack program aims to launch
satellite constellations with a high degree of mission-level autonomy
into low Earth orbit~\cite{forbes2020blackjack}.
The key objectives of the Blackjack program include developing the on-orbit,
distributed decision making capabilities within these satellite networks,
as agent coordination cannot rely upon the unreliable and high latency 
communications from ground control.

In either of the scenarios described above, the system would perform most efficiently
if a central coordinator could compute and relay the optimal decisions to each of the agents.
However, in the systems we have discussed, coordination by means of a central authority
is either impractical -- due to latencies and bandwidth limitations in communications,
scalability and security requirements, etc. --
or even impossible (e.g., dictating what route each driver must follow is not presently possible).
Thus, in these systems, decision making \emph{must} be distributed.
The inevitable loss in performance when coordination is distributed --
often referred to as the ``tragedy of the commons'' in economics and environmental sciences~\cite{lloyd1833two,hardin1968tragedy} --
is well-documented in many scenarios~\cite{adilov2015economic,ccolak2016understanding,giubilini2019antibiotic}.
Evidently, the design of algorithms that mitigate the losses in system performance 
stemming from the distribution of decision making is critical to 
the implementation of the multiagent systems described.

A fruitful paradigm for the design of distributed multiagent coordination algorithms
-- termed the \emph{game theoretic approach}~\cite{shamma2008cooperative,marden2013distributed} --
involves modelling the agents as players in a game and assigning them utility
functions that maximize the efficiency of the game's equilibria.
After agents' utilities are coupled with learning dynamics capable of driving the system to an equilibrium,
an efficient distributed coordination algorithm emerges.
This approach has been utilized in a variety of relevant contexts, including 
collaborative sensing in distributed camera networks~\cite{ding2012collaborative}, 
the distributed control of smart grid nodes~\cite{saad2012game}, 
autonomous vehicle-target assignment~\cite{arslan2007autonomous}
and optimal taxation on road-traffic networks~\cite{paccagnan2019incentivizing}.
A significant advantage of such an approach is that the design of the 
agents' learning dynamics and of the underlying utility structure can be decoupled.
As efficient distributed learning dynamics that drive the agents to an equilibrium 
of the game are already known (see, e.g, \cite{hart2000simple}),
we focus our attention on the design of agents' utility functions in order to 
maximize the efficiency of the equilibria.

The most commonly studied metric in the literature on utility design is the \emph{price of anarchy}~\cite{koutsoupias2009worst},
which is defined as the worst case ratio between the performance at an equilibrium and the best achievable system performance.
Note that a price of anarchy guarantee obtained for a set of utility functions
translates directly to an approximation ratio of the final distributed algorithm.
The majority of the literature focuses primarily on characterizing
the price of anarchy for a given set of player utility functions~\cite{marden2013distributed,marden2014generalized,vetta2002nash},
whereas fewer works design player utilities in order to optimize the price of anarchy~\cite{gairing2009covering,paccagnan2019utility1,chandan2019optimal}.
While several works provide tight bounds on the approximation ratio of polynomial-time centralized algorithms 
for the class of problems we consider (see, e.g., \cite{barman2020tight,sviridenko2017optimal,feige1998threshold}), 
there is currently no result in the literature that establishes comparable bounds 
on the best achievable price of anarchy, 
aside from the general bound put forward in Vetta~\cite{vetta2002nash} that is provably inexact.

\subsection{Model}

In this paper, we consider a class of resource allocation problems
with a set of agents $N = \{1,\dots,n\}$ and a set of resources $\mc{R}$.
Each resource $r \in \mc{R}$ has a corresponding welfare function $W_r:\naturals \to \reals$.
Each agent $i \in N$ must select an action $a_i$ from a corresponding
set of actions $\mc{A}_i \subseteq 2^\mc{R}$.
The system performance under an allocation of agents 
$a = (a_1, \dots, a_n) \in \mc{A} = \mc{A}_1 \times \dots \times \mc{A}_n$
is measured by a function $W : \mc{A} \to \reals_{> 0}$.
The goal is to find an allocation $\opt{a} \in \mc{A}$ that maximizes 
the function 
\begin{equation} \label{eq:welfare_definition}
    W(a) := \sum_{r \in \cup_i a_i} W_r(|a|_r),
\end{equation}
where $|a|_r = |\{ i \in N \text{ s.t. } r \in a_i \}|$ denotes the number of agents
selecting the resource $r$ in allocation $a$.
In this work, we consider nonnegative, nondecreasing concave welfare functions,
i.e., functions that satisfy the following properties:
(i)~$W_r(x)$ is nondecreasing and concave for $x\geq 0$; 
and, (ii)~$W_r(0)=0$ and $W_r(x)>0$ for all $x\geq 1$.
This setup has been thoroughly studied in the submodular maximization 
and game theoretic literature (see, e.g., \cite{arslan2007autonomous,barman2020tight2,barman2020tight,gairing2009covering,sviridenko2017optimal})
as demonstrated by the following two examples:

%%%%%%%%%%%%%%%%%%%%%%%%%%%%%%%%%%%%%%%%%%%%%%%%%%%%%%%%%%%%%%%%%%%%%%%%

\begin{example}[General covering problems]
\label{ex:covering_problems}
Consider the general covering problem~\cite{gairing2009covering},
which is a generalization of the max-n-cover problem~\cite{feige1998threshold,hochbaum1996approximating}.
In this setting, we are given a set of elements $E$ and 
$n$ collections $S_1,\dots,S_n$ of subsets of $E$, i.e.,
$S_i \subseteq 2^E$ for all $i=1,\dots,n$.
Each element $e \in E$ has weight $w_e \geq 0$.
The objective is to choose one subset $s_i$ from each collection $S_i$
such that the union $\cup_i s_i$ has maximum total weight, i.e.,
$\sum_{e \in \cup_i s_i} w_e$ is maximized.
We observe that this problem corresponds to a resource allocation problem
where each agent $i\in N$ has action set $\mc{A}_i \subseteq 2^E$,
the action $a_i$ of each agent $i\in N$ corresponds to the subset $s_i$,
and the welfare functions are $W_e(x)=w_e$ for all $e \in E$.
\end{example}

%%%%%%%%%%%%%%%%%%%%%%%%%%%%%%%%%%%%%%%%%%%%%%%%%%%%%%%%%%%%%%%%%%%%%%%%

\begin{example}[Vehicle-target assignment problem]
\label{ex:vehicle_target}
Consider the \emph{vehicle-target assignment problem},
first introduced in Murphey~\cite{murphey2000target}, 
and studied by, e.g., Arslan~et~al.~\cite{arslan2007autonomous}
% Paccagnan~et~al.~\cite{paccagnan2019utility2},
and Barman~et~al.~\cite{barman2020tight2}.
In this setting, we are given a set of $n$ vehicles $N$
and a set of targets $\mc{T}$, where each target $t \in \mc{T}$
has an associated value $v_t > 0$.
Each vehicle $i \in N$ has a set of feasible target assignments 
$\mc{A}_i \subseteq 2^{\mc{T}}$.
Given that a vehicle $i \in N$ is assigned to target $t\in\mc{T}$,
the probability that $t$ is destroyed by $i$ is $p_t \in (0,1]$.
The objective is to compute a joint assignment of vehicles $a \in \Pi_i \mc{A}_i$
that maximizes the expected value of targets destroyed,
which is measured as
\begin{equation}
    W(a) = \sum_{t \in \mc{T}} v_t \cdot (1- (1-p_t)^{|a|_t}),
\end{equation}
where $1-(1-p_t)^x$ is the probability that target $t$ is destroyed
when $x$ vehicles are assigned to it.
Observe that the vehicle-target assignment problem is a 
resource allocation problem 
with nonnegative, nondecreasing concave welfare functions
where the agents are the vehicles, 
the resources are the targets, and the welfare function on 
each resource $t\in\mc{T}$ is $W_t(x)=v_t\cdot(1-(1-p_t)^x)$.\footnote{
    Observe that a vehicle-target assignment problem with $p_t=p=1.0$ 
    for all targets $t\in\mc{T}$ is equivalent to a general covering 
    problem where the chosen subsets correspond with the vehicle assignments
    and the element weights are equal to the target values
    Nonetheless, we retain Example~\ref{ex:covering_problems} as it is
    thoroughly connected to the literature.
}
\end{example}

%%%%%%%%%%%%%%%%%%%%%%%%%%%%%%%%%%%%%%%%%%%%%%%%%%%%%%%%%%%%%%%%%%%%%%%%%%%

The focus of this work is on computing near-optimal distributed solutions 
within the class of resource allocation problems described above using 
the game theoretic approach.
To model this particular class of problems, we adopt the framework of 
\emph{resource allocation games}.
A resource allocation game $G=(N,\mc{R},\mc{A},\{F_r\}_{r\in\mc{R}})$
consists of a player set $N$ where each player $i \in N$ evaluates the allocation $a \in \mc{A}$
using a utility function 
\begin{equation} \label{eq:utility_definition}
    U_i(a) := \sum_{r \in a_i} F_r(|a|_r).
\end{equation}
where $F_r : \naturals \to \reals$ defines the utility a player receives at resource $r$
as a function of the total number of agents selecting $r$ in allocation $a$.
We refer to the functions $\{F_r\}_{r\in\mc{R}}$ as the \emph{local utility functions} of the game.
For a given set of welfare functions $\mc{W}$, it is convenient to define a utility mapping 
$\mc{F}:\mc{W}\to\reals^\naturals$, where it is understood that a resource $r$ with welfare 
function $W_r\in\mc{W}$ is assigned the local utility function $\mc{F}(W_r)$.

In the forthcoming analysis, we consider the solution concept of pure Nash equilibrium,
which is defined as any allocation $\nash{a} \in \mc{A}$ such that
\begin{equation} \label{eq:nash_condition}
    U_i(\nash{a}) \geq U_i(a_i, \nash{a}_{-i}), \quad \forall a_i \in \mc{A}_i, \quad \forall i\in N,
\end{equation}
where $a_{-i} = (a_1,\dots,a_{i-1},a_{i+1},\dots,a_n)$.
For a given game $G$, let $\rm{NE}(G)$ denote the set of all allocations $a \in \mc{A}$
that satisfy the Nash condition in Equation~(\ref{eq:nash_condition}).
We define the \emph{price of anarchy} of a game $G$ as\footnote{
    Note that the price of anarchy is well-defined for resource allocation games,
    since these games possess a potential function and, thus, at least one pure Nash equilibrium.
}
\begin{equation}
    \poa(G) := \frac{\min_{a\in\rm{NE}(G)}W(a)}{\max_{a\in\mc{A}}W(a)} \leq 1.
\end{equation}
For a given game $G$, the price of anarchy is the ratio between the system-wide performance of
the worst performing pure Nash equilibrium and the optimal allocation.
The price of anarchy as defined here also applies to the efficiency of the game's
coarse-correlated equilibria~\cite{roughgarden2015intrinsic,chandan2019optimal},
for which many efficient algorithms exist (see, e.g., \cite{hart2000simple}).
We extend the definition of price of anarchy to a given set of games $\mc{G}$, 
which may contain infinitely many instances, as $\poa(\mc{G}) := \inf_{G \in \mc{G}} \poa(G) \leq 1$.
It is important to note that a higher price of anarchy corresponds to an overall improvement 
in the performance of all pure Nash equilibria,
and that $\poa(\mc{G})=1$ implies that all pure Nash equilibria in all games $G\in\mc{G}$ are optimal.
For a given utility mechanism $\mc{F}$, we use the terminology 
``the set of games $\mc{G}$ \emph{induced by} the set of welfare functions $\mc{W}$''
to refer to the set of all games with $W_r\in \mc{W}$ and $F_r=\mc{F}(W_r)$ for all $r\in\mc{R}$.
Given a set $\mc{W}$, our aim is to develop an efficient technique for computing a utility mechanism $\opt{\mc{F}}$ 
that maximizes the price of anarchy in the corresponding set of games $\mc{G}$ induced by $\mc{W}$, 
i.e., we wish to solve
\begin{equation}
    \opt{\mc{F}} \in \argmax_{\mc{F}} \poa(\mc{G}).
\end{equation}

\subsection{Results and Discussion}
\label{section:results}

Our main result is an efficient technique for computing a utility mechanism
that guarantees a price of anarchy of $1-c/e$ in all resource allocation
games with nonnegative, nondecreasing concave welfare functions 
with maximum curvature $c$.

\begin{definition}[Curvature~\cite{conforti1984submodular}] \label{def:curvature}
    The curvature of a nondecreasing concave function $W:\naturals\to\reals$ is
    \begin{equation}
        c = 1-\frac{W(n)-W(n-1)}{W(1)}.
    \end{equation}
\end{definition}

\noindent
In the literature on submodular maximization,
the curvature is commonly used 
to compactly parameterize broad classes of functions.
The notion of curvature we consider was originally defined by Conforti~et~al.~\cite{conforti1984submodular}
in the context of general nondecreasing submodular set functions.
In our specific setup, this reduces to the expression in Definition~\ref{def:curvature}.
Observe that all nondecreasing concave functions have curvature $c \in [0,1]$.
Thus, $c=1$ can be considered in scenarios where the maximum curvature among 
functions in the set $\mc{W}$ is not known.

\begin{theorem}[Informal]
\label{thm:informal}
Let $\mc{G}$ denote the set of all resource allocation games with nonnegative,
nondecreasing concave welfare functions with maximum curvature $c$.
An optimal utility mechanism achieves $\poa(\mc{G})=1-c/e$ 
and can be computed efficiently.
\end{theorem}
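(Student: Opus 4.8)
The plan is to attack the two halves of the claim separately: first, to reduce the price of anarchy of a \emph{fixed} utility mechanism to the value of a linear program, and second, to optimize over the mechanism itself, showing that the optimal value admits the closed form $1-c/e$ and that the optimizing mechanism is the solution of a polynomially-sized program.

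First I would establish a tractable characterization of $\poa(\mc{G})$ for a fixed mechanism $\mc{F}$. The standard route is a smoothness-type argument: fix a game $G$, a Nash equilibrium $\nash{a}$ and an optimal allocation $\opt{a}$, sum the equilibrium inequalities in Equation~(\ref{eq:nash_condition}) over all players after each player $i$ contemplates the unilateral deviation to $\opt{a}_i$, and compare the result against the welfare gap $W(\opt{a})-W(\nash{a})$. Because both $W$ and the $U_i$ decompose additively over resources via Equations~(\ref{eq:welfare_definition}) and~(\ref{eq:utility_definition}), this global inequality collapses into a family of \emph{per-resource} inequalities, one for each way a resource can be shared between $\nash{a}$ and $\opt{a}$. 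Classifying resources by the integer triple counting the players that select them in the equilibrium only, in the optimum only, and in both, one obtains that $\poa(\mc{G})^{-1}$ equals the optimal value of a linear program whose decision variables are the normalized masses of each resource type and whose data are the values of $W_r(\cdot)$ and $\mc{F}(W_r)(\cdot)$ at integer arguments. Reducing to the extremal welfare functions of curvature $c$ and exploiting the scale invariance of the ratio keeps this program finite and of size polynomial in $n$.

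With the inner LP in hand, I would write the design problem $\opt{\mc{F}} \in \argmax_{\mc{F}}\poa(\mc{G})$ as a min--max in which the mechanism plays against the worst-case resource configuration. Dualizing the inner maximization by strong LP duality turns this bilinear problem into a single linear program jointly over the utility values $\mc{F}(W_r)(1),\dots,\mc{F}(W_r)(n)$ and the dual multipliers; its optimal value is exactly $\poa(\mc{G})^{-1}$, and any optimizer yields both $\opt{\mc{F}}$ and a matching worst-case instance. This step simultaneously delivers the ``efficiently computable'' conclusion, since the program has $O(n)$ variables and polynomially many constraints.

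The remaining and hardest step is to evaluate this program in closed form and recognize the value $1-c/e$. I would do this by exhibiting matching certificates. For achievability I would propose an explicit mechanism --- a curvature-weighted analogue of the share functions known to be optimal for covering --- feed it into the per-resource inequalities, and verify that every constraint holds with the uniform guarantee $1-c/e$ for all $n$, so that the infimum defining $\poa(\mc{G})$ is bounded below by $1-c/e$; the factor $1/e$ surfaces through a limit of the form $(1-1/k)^{k}\to 1/e$ as the multiplicity $k$ grows, while the dependence on $c$ enters through the curvature-controlled marginal $W_r(k)-W_r(k-1)$ relative to $W_r(1)$ in Definition~\ref{def:curvature}. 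For the converse I would build a symmetric, covering-like family of instances indexed by $n$ on which every mechanism is forced below $1-c/e$, taking $n\to\infty$ so that the infimum is pinned from above. I expect the main obstacle to be precisely this matching: constructing the optimal dual and primal solutions of the design LP analytically rather than numerically, and controlling the $n\to\infty$ asymptotics so that the achievability and impossibility bounds coincide at $1-c/e$ uniformly in the curvature parameter.
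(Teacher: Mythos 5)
Your outer scaffolding matches the paper's: the per-resource LP characterization of the optimal price of anarchy, with constraints indexed by triples counting equilibrium-only, optimum-only, and shared players, is exactly Theorem~6 of Chandan~et~al.\ (invoked in the paper as Proposition~\ref{prop:linear_programming_based_approach}), and your min--max/dualization step essentially re-derives it, since that LP is already jointly linear in the utility values and $\rho$. The genuine gap is hidden in the phrase ``reducing to the extremal welfare functions of curvature $c$ \dots keeps this program finite.'' The class $\mc{W}$ is an infinite-dimensional cone, the LP characterization applies one welfare function at a time, and a mechanism must specify $\mc{F}(W)$ for \emph{every} $W\in\mc{W}$; your proposal never says why verifying the guarantee on extremal functions suffices, nor how the utility for a non-extremal $W$ is constructed. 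The paper closes exactly this hole with the two lemmas you are missing: Lemma~\ref{lem:nonnegative_lincomb} shows every nonnegative, nondecreasing concave $W$ with curvature at most $c$ decomposes as $W=\sum_{k=1}^{n}\eta_k V^c_k$ with explicit nonnegative coefficients (second differences of $W$ divided by $c$), where $V^c_k(x)=(1-c)x+c\min\{x,k\}$; and Part~iii) observes that, for \emph{fixed} $\rho$, the LP constraints are linear in the pair $(W,F)$, so setting $\mc{F}(W)=\sum_k\eta_k F^c_k$ with the common value $\rho=(1-c/e)^{-1}\geq(1-c\,k^k e^{-k}/k!)^{-1}$ makes each constraint for $W$ a nonnegative combination of satisfied constraints for the basis pairs $(V^c_k,F^c_k)$. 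This transfer argument is the theorem's actual content and also the true source of tractability: the optimal utility depends on $W$ through $n$ decomposition coefficients, not through the single scalar $c$.

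Relatedly, your achievability certificate is underpowered. A ``curvature-weighted analogue of the covering share function'' has one degree of freedom, but an arbitrary concave $W$ with curvature $\leq c$ has $n$ free marginals $W(1)\geq W(2)-W(1)\geq\cdots$, all of which enter the per-resource constraints; there is no reason a one-parameter interpolation can satisfy the design LP at $\rho=(1-c/e)^{-1}$ uniformly over the class, and verifying it for arbitrary concave $W$ is precisely where the basis decomposition is needed. The closed-form work you correctly flag as the hard part lives in the paper's Lemma~\ref{thm:coverage_solution}, which solves the per-basis-function LP by an explicit recursion for $F^\alpha_\beta$, a monotonicity argument, and an $n\to\infty$ limit in which the partial sums $\sum_j \beta^j/j!$ converge to $e^\beta$ (a Poisson-type limit rather than your $(1-1/k)^k$), yielding $\rho=(1-\alpha\beta^\beta e^{-\beta}/\beta!)^{-1}$. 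That lemma also makes your separate lower-bound construction unnecessary: since $V^c_1$ itself lies in $\mc{W}$ and its \emph{optimal} price of anarchy is $1-c/e$, the upper bound $\poa(\mc{G})\leq 1-c/e$ follows immediately once the LP characterization is known to be tight, which is part of the cited result; building matching worst-case instances by hand, as you propose, duplicates that tightness statement.
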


A significant consequence of the main result is a universal
guarantee that the best achievable price of anarchy is always 
greater than $1-1/e \approx 63.2\%$ for resource allocation games 
with nonnegative, nondecreasing concave welfare functions.
Note that since $1-1/e$ is the optimal price of anarchy in 
general covering games (see, e.g., Example~\ref{ex:covering_problems}),
it cannot be further improved without more information
about the underlying set of welfare functions.
Our guarantee improves to $1-c/e$ if the curvature $c$ of the 
underlying set of welfare functions is known.

Observe that the result in Theorem~\ref{thm:informal} also implies that one can
efficiently compute a ``universal'' utility mechanism, in that it would guarantee 
a price of anarchy greater than or equal to $1-1/e$ with respect to any game 
with nonnegative, nondecreasing concave welfare functions.
This follows from the observation that $c\leq 1$ always holds.
Of course, if more information is available about the underlying set of welfare functions 
(e.g., the maximum curvature), then this lower bound can be improved.
In the case where the entire set of welfare functions $\mc{W}$ is known \emph{a priori}
and $|\mc{W}|$ is ``small enough'', then the optimal utility mechanism can be computed using 
existing methodologies (see, e.g., \cite{chandan2019optimal}).\footnote{
    In this case, the optimal utility mechanism can be found as the solution of $|\mc{W}|$
    linear programs with number of constraints that is quadratic in the maximum number of agents $n$,
    and $n+1$ decision variables.
    For this reason, the optimal utility mechanism can only be computed
    for modest values of $|\mc{W}|$ and $n$.
}
Consider the sets represented in Figure~\ref{fig:subsets}.
From our reasoning, it holds that as the size of the 
set of welfare functions considered is reduced, 
the prices of anarchy of the 
corresponding optimal utility mechanisms increase.
The set of games induced by welfares in the green ellipse, 
for example, coincides with the vehicle-target assignment problem,
as described in Example~\ref{ex:vehicle_target},
where $p_t=p\in[0,1]$ for all $t\in\mc{T}$.
Note that the welfare function $W_t$ of each target $t\in\mc{T}$ in this problem 
is nonnegative, nondecreasing concave 
(i.e., the green ellipse is a subset of the dotted red box).
Thus, we can immediately observe that the best achievable price of anarchy 
in the corresponding resource allocation game $G$ satisfies 
\[ \poa(G) \geq 1-\frac{1}{e}, \]
which is achieved by the universal utility mechanism from Theorem~\ref{thm:informal}.
Since there is only a single welfare function in this setting
(ignoring uniform scalings) 
the optimal utility mechanism can be computed 
for a modest number of agents, as aforementioned.

\begin{figure}
\centering
\begin{tikzpicture}
    \draw[red,densely dotted,very thick] (-4,-2) rectangle (4,2);
    \node[right,red] () at (-4,1.75) {All nonnegative, nondecreasing concave functions};
    \draw[] (-3.5,-1.75) rectangle (2.75, 1.5);
    \node[right,text width=6.5cm] () at (-3.5,1) {All nonnegative, nondecreasing concave functions with maximum curvature $c$};
    \draw[green!50!black,mark=diamond] (-1,-0.5) ellipse (2.25 and 1);
    \node[green!50!black,text width=4cm] () at (-0.75,-0.5) {Vehicle-target assignment welfares with $p_t=p$, $\forall t\in\mc{T}$};
\end{tikzpicture}
\caption{The set of games induced by the set of all nonnegative, nondecreasing concave functions 
contains the set of all nonnegative, nondecreasing concave functions with maximum curvature $c$, 
which in turn contains the set of all vehicle-target assignment problems with $p_t=p$.}
\label{fig:subsets}
\end{figure}
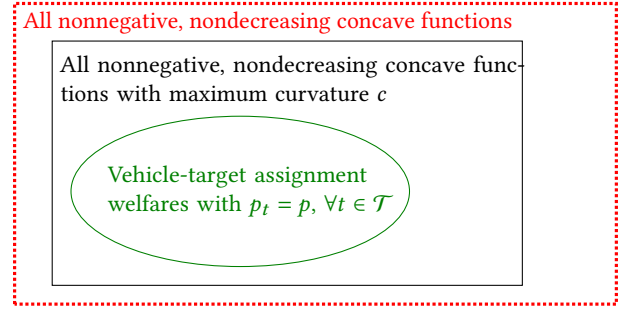

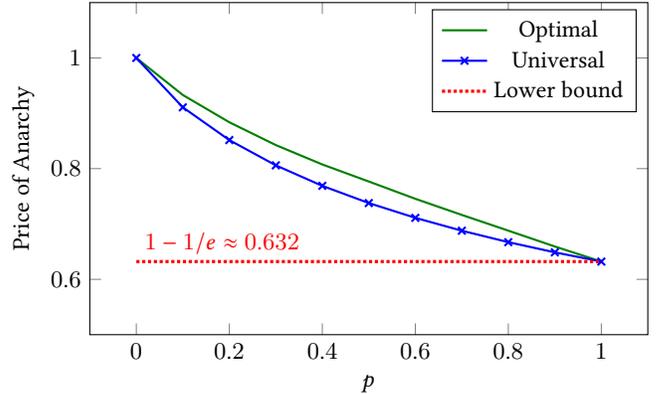
\begin{figure}[t]
\centering
\begin{tikzpicture}
    \begin{axis}[
        height=6cm, width=9cm,
        ymin=0.5, ymax=1.1,
        xlabel=$p$,ylabel={Price of Anarchy},
    ]
    \addplot[
        domain=0:1,
        samples=10,
        color=green!50!black,
        thick,
    ]
    coordinates {
        (0,1) 
        (0.1,0.9330969710206909)
        (0.2,0.8838285124383216)
        (0.30000000000000004,0.842579669620065)
        (0.4,0.807577632411875)
        (0.5,0.7767356461510652)
        (0.6,0.7455275380737876)
        (0.7000000000000001,0.7163904490282815)
        (0.8,0.6879681610634845)
        (0.9,0.6593665875485195)
        (1.0,0.6321205588285508)
    };
    \addlegendentry{Optimal};
    \addplot[
        domain=0:1,
        samples=10,
        color=blue,
        thick, mark=x,
    ]
    coordinates{
        (0,1)
        (0.1,0.9108092426192466)
        (0.2,0.8516066222121267)
        (0.30000000000000004,0.805961935454864)
        (0.4,0.7689021787364665)
        (0.5,0.7378094488011929)
        (0.6,0.7111201368784607)
        (0.7000000000000001,0.6878172273937563)
        (0.8,0.6671988197637683)
        (0.9,0.6487589648755244)
        (1.0,0.6321205548327937)
    };
    \addlegendentry{Universal}
    \addplot[
        domain=0.0001:1,
        samples=100,
        color=red,
        very thick,densely dotted,
    ]
    {1-1/exp(1)};
    \addlegendentry{Lower bound};
    \node[right,red] () at (axis cs:0,0.665) {$1-1/e \approx 0.632$};
    \end{axis}
\end{tikzpicture}
\caption{
The price of anarchy of the universal utility mechanism obtained in this work 
and the optimal utility mechanism in the vehicle-target assignment problems with $p_t\in[0,p]$ for all $t\in\mc{T}$.
Note that this utility mechanism is designed for the set of all nonnegative, nondecreasing
concave welfare functions but its price of anarchy is close to the best achievable
within this particular setting.
}
\label{fig:theoretical_bounds}
\end{figure}

In Figure~\ref{fig:theoretical_bounds}, we plot 
the price of anarchy corresponding to the optimal utility mechanism within this setting (labelled ``Optimal''),
the price of anarchy achieved by the universal utility mechanism (labelled ``Universal'')
and the $1-1/e$ lower bound from Theorem~\ref{thm:informal} (labelled ``Lower bound'').
As expected, the optimal utility mechanism corresponds with the 
best price of anarchy as it was designed specifically for the 
underlying welfare function.
However, knowledge of the set of welfare functions 
corresponds with only a small increase in the price of anarchy;
the price of anarchy achieved by the universal utility mechanism
is surpisingly close to the best achievable by any mechanism for all values of $p\in[0,1]$.
Note that the universal utility mechanism is only guaranteed
to achieve a price of anarchy of $1-1/e$.

Consider once again the sets represented in Figure~\ref{fig:subsets}.
The above example suggests that the utility mechanism designed to maximize the price of anarchy 
in the set of games induced by all welfare functions in the dotted, red box 
may achieve price of anarchy close to the optimal 
within the sets of games induced by any subset of the dotted red box,
as we have observed that this holds for the set of games induced by 
welfare functions in the green ellipse.
While we do not provide formal proofs for these observations,
they provide further motivation for deriving efficient techniques
for computing utility mechanisms that maximize the price of anarchy
with respect to broad classes of welfare functions.

\subsection{Related Works}
Submodular resource allocation problems have been the focus of a
significant research effort for many years, particularly in the optimization community.
Since the computation of an optimal allocation in such problems is $\mc{NP}$-hard in general,
many researchers have focused on providing approximation guarantees for polynomial-time algorithms.
For example, approximate solutions to max-$n$-cover problems were studied by 
Feige~\cite{feige1998threshold} and Hochbaum~\cite{hochbaum1996approximating} almost 25 years ago.
In the latter manuscript, the greedy algorithm is show to have 
an approximation ratio of $1-1/e$.
Recently, Sviridenko~et~al.~\cite{sviridenko2017optimal} proposed a polynomial-time
algorithm for computing approximate solutions that perform within $1-c/e$ of the optimal 
for the class of resource allocation problems with nonnegative, nondecreasing submodular 
welfare functions with curvature $c$. 
Barman~et~al.~\cite{barman2020tight} provide a polynomial-time algorithm
that returns allocations with a $1-k^k e^{-k}/(k!)$ approximation ratio 
in resource allocation problems with welfare functions 
$W_r(x)=\min\{x,k\}$ for $k\in\naturals_{\geq 1}$, for all $r\in\mc{R}$.
In their respective works, all three of the approximation ratios provided above are also shown to be 
the best achievable, i.e., 
it is shown that there exist no other polynomial-time algorithm capable 
of always computing an approximate solution that is closer to the optimal unless $\mc{P}=\mc{NP}$.
In this work, we obtain price of anarchy guarantees in resource allocation games
that match these approximation ratios from submodular maximization.

Although utility mechanisms have been studied in resource allocation games,
the majority of results have focused on deriving price of anarchy bounds for given utility structures
(e.g., marginal contribution, equal shares, etc.)~\cite{marden2013distributed,marden2014generalized}.
In this respect, Vetta~\cite{vetta2002nash} proves that there always exist
player utility functions that guarantee a price of anarchy larger than $50\%$ 
within a more general class of games than those we consider here.\footnote{
    In the class of \emph{valid-utility games}, the system objective $W:\mc{A}\to\reals$ 
    is a nondecreasing submodular set function over the agents' actions
    and is not necessarily separable over a set of resources; much more
    general than the class of resource allocation games with nonnegative,
    nondecreasing concave welfare functions.
}
The notion of utility mechanisms that \emph{maximize} -- or otherwise improve --
the price of anarchy was introduced in Christodoulou~et~al.~\cite{christodoulou2004coordination}.
This approach has been applied to many distributed optimization problems,
including machine scheduling~\cite{caragiannis2013efficient,immorlica2009coordination}, 
selfish routing~\cite{bilo2019dynamic,caragiannis2010taxes} and auction mechanism design~\cite{roughgarden2014barriers,syrgkanis2013composable}.
A prominent example in this line of research is Gairing~\cite{gairing2009covering},
who proves that the best achievable price of anarchy in covering games is $1-1/e$ and derived an
optimal utility mechanism.
We provide an efficient technique for computing a utility mechanism
that achieves a price of anarchy larger than $1-1/e\approx 63.2\%$ 
in all resource allocation games with nonnegative, nondecreasing concave welfare functions, 
which effectively generalizes the result in~\cite{gairing2009covering}
and significantly improves upon the bound provided in~\cite{vetta2002nash}
by exploiting the structure on $W$ (see Equation~(\ref{eq:welfare_definition})).

More recently, Chandan~et~al.~\cite{chandan2019optimal} proposed a linear 
programming based approach for computing a utility mechanism
that maximizes the price of anarchy for a given set of resource allocation games.
Unfortunately, their approach does not
provide \emph{a priori} guarantees on the price of anarchy achieved.
In other words, a concrete lower bound on the best achievable price of anarchy 
using their approach cannot be obtained without first solving 
the linear programs corresponding to the underlying set of welfare functions.
This can only be accomplished with this linear programming based approach
for a finite number of welfare functions and a modest number of players.
Our goal in this work is to derive guarantees on the best achievable 
price of anarchy for games induced by the set of all welfare functions satisfying 
a few specific properties (e.g., concavity, maximum curvature).
In the next section, we provide an explicit expression for a utility mechanism that is guaranteed
to have price of anarchy greater than or equal to $1-c/e$ for all resource allocation games with 
nonnegative, nondecreasing concave welfare functions with maximum curvature $c$.
Since the linear programming based approach returns a utility mechanism that maximizes the price of anarchy, 
the claim in Theorem~\ref{thm:informal} extends to the solutions of the linear program.

\subsection{Organization}

The remainder of the paper is structured as follows:
Section~\ref{section:main_result} presents the proof of the main result
and an extension result for more specific sets of welfare functions.
Section~\ref{section:simulation} showcases our simulation example and accompanying discussion.
Section~\ref{section:conclusion} concludes the manuscript and provides a brief discussion on potential future directions.
All proofs omitted from the manuscript are provided in the appendix, for ease of exposition.

\section{Main Result and Extensions}
\label{section:main_result}

In this section, we prove the claim in Theorem~\ref{thm:informal} by constructing 
a utility mechanism that achieves the best achievable price of anarchy of $1-c/e$ 
with respect to the set of all nonnegative, nondecreasing concave welfare functions 
with maximum curvature $c\in[0,1]$.
In scenarios where a more specific set of welfare functions is considered,
we outline how the techniques used to prove Theorem~\ref{thm:informal}
can be generalized to derive tighter \emph{a priori} bounds on the best 
achievable price of anarchy.

Before presenting the proof of Theorem~\ref{thm:informal}, we provide an informal 
outline of the three steps underpinning the result.
These steps correspond with the three parts of the formal proof, but
are listed in a different order for sake of clarity.
For the reader's convenience, we include the part of the proof that corresponds
with each of the steps in our informal outline.
The proof is summarized as follows:

\vspace{.2cm}\noindent\emph{--Step \#1:} 
We demonstrate that any concave welfare function can be decomposed as a linear 
combination with nonnegative coefficients of a specialized set of basis functions. 
[Section~\ref{sec:main_proof}, Part~ii)]

\vspace{.2cm}\noindent\emph{--Step \#2:} 
We derive optimal basis utility functions for each of the basis functions in the 
specialized set. [Section~\ref{sec:main_proof}, Part~i)]

\vspace{.2cm}\noindent\emph{--Step \#3:} 
We construct local utility functions as linear combinations over the optimal basis utility 
functions from Step 2 with the nonnegative coefficients derived in Step 1. 
Finally, we demonstrate that this tractable approach for constructing resource 
utility functions provides near optimal efficiency guarantees. 
[Section~\ref{sec:main_proof}, Part~iii)]

\subsection{Proof of Theorem~\ref{thm:informal}} \label{sec:main_proof}
Here we consider the class of games induced by the set of all concave welfare functions
with maximum curvature $c\in[0,1]$.
The proof of Theorem~\ref{thm:informal} proceeds in the following three parts:

\begin{enumerate}[leftmargin=*]
    \item[i)] Given a value $c\in[0,1]$, we derive explicit expressions for 
    the local utility functions that maximize the price of anarchy relative to a 
    restricted class of nonnegative, nondecreasing concave welfare functions with curvature $c$.
    Among the optimal price of anarchy values obtained for the functions in this restricted class, 
    the lowest is equal to $1-c/e$;
    \item[ii)] We show that any nonnegative, nondecreasing concave welfare function $W$ 
    with curvature less than or equal to $c$ can be represented 
    as a linear combination with explicitly defined nonnegative coefficients over this restricted class; and,
    \item[iii)] We demonstrate that using the local utility functions computed as a linear combination
    over the optimal local utility functions from i) with the nonnegative coefficients from ii)
    guarantees that $\poa(\mc{G})=1-c/e$ within the set of resource allocation games $\mc{G}$
    induced by all nonnegative, nondecreasing concave welfare functions with maximum curvature $c$.
\end{enumerate}

The above parts successfully prove Theorem~\ref{thm:informal} as we argue here.
Note that, 
by part i), the lowest optimal price of anarchy among welfare functions 
in the restricted class considered is equal to $1-c/e$, for given curvature $c\in[0,1]$.
By part iii), this implies that all resource allocation games induced by 
nonnegative, nondecreasing concave welfare functions with 
maximum curvature $c$ have optimal price of anarchy equal to $1-c/e$.
This is because, by part ii), any such welfare function can be represented as 
a nonnegative linear combination over the restricted class of welfare functions we consider.
Since the best achievable price of anarchy for at least one of the functions in the restricted class is also $1-c/e$,
one cannot further improve the price of anarchy within the set of games considered.
In addition, parts i)--iii) combine to prove that a corresponding utility mechanism
that maximizes the price of anarchy entails computing nonnegative linear combinations over a class of functions 
with explicit expressions.
Thus, the computation of optimal local utility functions 
is polynomial in the number of players.

%%%%%%%%%%%%%%%%%%%%%%%%%%%%%%%%%%%%%%%%%%%%%%%%%%%%%%%%%%%%%%%%%%%%%%%%%%%

\vspace*{.25cm}\noindent{\bf Part i).} 
In this part of the proof, we provide explicit expressions for local utility functions
that maximize the price of anarchy with respect to a restricted set of welfare functions, 
as well as the corresponding optimal price of anarchy.
To that end, given parameters $\alpha\in[0,1]$ and $\beta\in\naturals_{\geq 1}$,
we define the \emph{$(\alpha,\beta)$-coverage} function as
\begin{equation}
    V^\alpha_\beta(x) := (1-\alpha)\cdot x + \alpha\cdot\min\{x,\beta\}.
\end{equation}
It is straightforward to verify that every $(\alpha,\beta)$-coverage function is 
nonnegative, nondecreasing concave.
In the lemma below, we derive a local utility function that maximizes 
the price of anarchy of the set of resource allocation games induced by any
given $(\alpha,\beta)$-coverage function.
We use this result to derive the 
optimal utility functions for a broad range of local welfare functions
in Part iii).

\begin{lemma} \label{thm:coverage_solution}
Consider the set of resource allocation games $\mc{G}$ induced by 
the $(\alpha,\beta)$-coverage function
\[ V^\alpha_\beta(x) = (1-\alpha)\cdot x + \alpha\cdot\min\{x,\beta\}, \]
where $\alpha \in [0,1]$ and $\beta \in \naturals_{\geq 1}$.
Let $\rho = (1 - \alpha \cdot \beta^\beta e^{-\beta} / (\beta!) )^{-1}$,
and define $F^\alpha_\beta$ as in the following recursion: $F^\alpha_\beta(1):=W(1)$,
\begin{equation}
    F^\alpha_\beta(x+1):=\max\Big\{ \frac{1}{\beta} [x F^\alpha_\beta(x) - V^\alpha_\beta(x) \rho] + 1, 1-\alpha \Big\}, \> \forall x=1,\dots,n-1.
\end{equation}
Then, the local utility function $F^\alpha_\beta$ maximizes the price of anarchy and 
the corresponding price of anarchy is $\poa(\mc{G})=1/\rho$.
\end{lemma}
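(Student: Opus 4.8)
\emph{Proof plan.} The plan is to establish the lemma through the linear--programming characterization of the optimal price of anarchy for a fixed welfare function, as developed in~\cite{chandan2019optimal}. The starting observation is that, since both the welfare in~\eqref{eq:welfare_definition} and the utilities in~\eqref{eq:utility_definition} are separable over resources, the worst--case ratio $W(\nash{a})/W(\opt{a})$ is dictated entirely by the local configuration of agents on a single resource. Concretely, I would sum the Nash inequalities~\eqref{eq:nash_condition} over all agents and rewrite $\sum_i U_i(\nash{a})$ resource--by--resource; comparing against $W(\opt{a})$ and normalizing $F^\alpha_\beta(1)=V^\alpha_\beta(1)=1$ (admissible by scale invariance of the price of anarchy) reduces the claim to a family of smoothness inequalities, one per pair of nonnegative multiplicities $(x,o)$ recording how many agents select a resource at the equilibrium and at the optimum, namely
\begin{equation} \label{eq:smoothness_plan}
\rho\, V^\alpha_\beta(x) \;\ge\; V^\alpha_\beta(o) + x\,F^\alpha_\beta(x) - o\,F^\alpha_\beta(x+1).
\end{equation}
Verifying \eqref{eq:smoothness_plan} for every $(x,o)$ certifies $\poa(\mc{G}) \ge 1/\rho$, while exhibiting a single game meeting the bound certifies both $\poa(\mc{G}) \le 1/\rho$ and the optimality of $F^\alpha_\beta$; this primal--dual pairing is the backbone of the argument.

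Next I would show that the stated $F^\alpha_\beta$ and $\rho$ form exactly such a matched pair. The recursion is reverse--engineered from the binding constraints: using $V^\alpha_\beta(x)=x$ for $x\le\beta$ and $V^\alpha_\beta(\beta)=\beta$, one checks that the right--hand side of \eqref{eq:smoothness_plan} is maximized over $o$ at $o=\beta$ precisely when $F^\alpha_\beta(x{+}1)\in(1-\alpha,1)$, and setting that instance to equality rearranges into $F^\alpha_\beta(x+1) = \tfrac1\beta[x F^\alpha_\beta(x) - V^\alpha_\beta(x)\rho] + 1$, which is the first branch of the stated $\max$. The floor $1-\alpha$ is the second ingredient: for $x\ge\beta$ the welfare $V^\alpha_\beta$ has constant marginal value $1-\alpha$, and I would verify that once $F^\alpha_\beta$ attains this value it is self--sustaining, since the first branch then evaluates to at most $1-\rho\alpha\le 1-\alpha$ whenever $\rho\ge1$. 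Confirming that every non--binding instance of \eqref{eq:smoothness_plan} holds --- essentially a monotonicity check on the decreasing sequence $F^\alpha_\beta(x)$ together with the fact that $o=\beta$ dominates all other optimal multiplicities --- is where the bulk of the routine case analysis lives.

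Finally, I would pin down the closed form of $\rho$. Solving the linear recursion over $x\le\beta$ yields the explicit expression $F^\alpha_\beta(x) = \rho + (1-\rho)\tfrac{(x-1)!}{\beta^{x-1}}\sum_{m=0}^{x-1}\tfrac{\beta^m}{m!}$, from which feasibility of \eqref{eq:smoothness_plan} gives the lower bound $\poa(\mc{G})\ge 1/\rho$. The value of $\rho$ is then fixed by the matching upper bound: a worst--case balls--in--bins instance in which the optimum places $\beta$ agents on each resource while the equilibrium distributes agents so that their per--resource count is asymptotically Poisson with mean $\beta$. In this construction the linear part of $V^\alpha_\beta$ (weight $1-\alpha$) suffers no efficiency loss, whereas its capped part (weight $\alpha$) loses exactly the Poisson mass at the mode, $\beta^\beta e^{-\beta}/\beta!$, the same quantity appearing in~\cite{barman2020tight}, which produces $\rho = (1-\alpha\,\beta^\beta e^{-\beta}/\beta!)^{-1}$ in the large--agent limit defining $\poa(\mc{G})$. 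I expect this last step --- demonstrating that the recursion--defined $F^\alpha_\beta$ is feasible for precisely this $\rho$ and that the Poissonized instance attains the matching bound with the stated Euler--constant form --- to be the main obstacle; the separability reduction and the feasibility verification, though lengthy, are comparatively mechanical.
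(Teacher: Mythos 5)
Your overall architecture matches the paper's: both routes go through the linear-programming characterization of the optimal price of anarchy from Chandan~et~al.~\cite{chandan2019optimal}, both reverse-engineer the recursion from the binding constraint at $o=\beta$ (and your self-sustaining floor computation, $\tfrac1\beta[x(1-\alpha)-V^\alpha_\beta(x)\rho]+1 \leq 1-\alpha\rho \leq 1-\alpha$, is correct), and your closed form $F^\alpha_\beta(x)=\rho+(1-\rho)\tfrac{(x-1)!}{\beta^{x-1}}\sum_{m=0}^{x-1}\beta^m/m!$ is the correct solution of the recursion on $x\leq\beta$. However, there are two genuine gaps. First, and most seriously, your claim that ``exhibiting a single game meeting the bound certifies both $\poa(\mc{G})\le 1/\rho$ and the optimality of $F^\alpha_\beta$'' is a non sequitur: a tight instance for the particular mechanism $F^\alpha_\beta$ shows only that \emph{this} mechanism achieves no more than $1/\rho$; it says nothing about whether some other utility function does better, which is what ``maximizes the price of anarchy'' asserts. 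The paper never constructs an instance at all. Instead it uses the fact (Theorem~6 of \cite{chandan2019optimal}, reproduced as Proposition~\ref{prop:linear_programming_based_approach}) that the LP value \emph{equals} the best achievable price of anarchy over all mechanisms, and then proves $(F^\alpha_\beta,\rho)$ is an LP \emph{optimizer}: it shows any feasible $F$ must be nonincreasing while above $1-\alpha$ (via Lemma~\ref{lem:f_continues_increasing}, a nontrivial contradiction argument), unrolls the recursion to get a lower bound $\rho \geq e^\beta/(e^\beta-\alpha\beta^\beta/\beta!)$ on every feasible point as $n\to\infty$, and checks the claimed pair attains it. To repair your plan you would either need a mechanism-independent adversarial construction (a substantial piece of work in the style of Gairing~\cite{gairing2009covering}, not a by-product of your Poissonized instance) or fall back on the LP-optimality argument as the paper does.

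Second, your constraint family is incomplete. The exact characterization requires the triplet constraints over $\mc{I}(n)$, i.e.\ $\rho W(x) \geq W(y) + (x-z)F(x) - (y-z)F(x+1)$ with an overlap variable $z$ and the cap $x+y-z\leq n$; for a nonincreasing $F$ the binding choice is $z=\max\{0,x+y-n\}$, so when $x+y>n$ the coefficients become $\min\{x,n-y\}$ and $\min\{y,n-x\}$, not the plain $x$ and $o$ in your inequality \eqref{eq:smoothness_plan}. Your family covers only the $x+o\leq n$ region, and the missing boundary constraints are not implied by the ones you verify (the needed implication would require $W(y)-W(n-x)\leq (x+y-n)F(x)$, which does not follow in general). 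These boundary constraints are precisely where the bulk of the paper's case analysis lives --- they are why the binding $y$ shifts to $y=n-x$ or $y=n$ once $F$ reaches the floor $1-\alpha$ --- so the step you label ``comparatively mechanical'' is in fact the technical core, and skipping it leaves the lower bound $\poa(\mc{G})\geq 1/\rho$ unproven for finite-$n$ games.
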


\begin{proof}
    The proof is presented in Appendix~A.1.
\end{proof}

According to the result in Lemma~\ref{thm:coverage_solution},
the maximum achievable price of anarchy in resource allocation games 
induced by a $(\alpha,\beta)$-coverage function with $\alpha=1$ and $\beta\geq 1$ is $1- \beta^\beta e^{-\beta} / (\beta!)$.
Surprisingly, Barman~et~al.~\cite{barman2020tight} show that that the optimal approximation ratio 
of any polynomial-time algorithm for the same class of resource allocation problems is also 
$1- \beta^\beta e^{-\beta} / (\beta!)$.
Similarly, the optimal price of anarchy for the $(\alpha,\beta)$-coverage function 
with $\alpha\in[0,1]$ and $\beta=1$ is $1-\alpha/e$,
which matches the best achievable approximation ratio of any polynomial-time algorithm
for this problem setting~\cite{sviridenko2017optimal}.

%%%%%%%%%%%%%%%%%%%%%%%%%%%%%%%%%%%%%%%%%%%%%%%%%%%%%%%%%%%%%%%%%%%%%%%%%%%

\vspace*{.25cm}\noindent{\bf Part ii).}
In the next result, we show that any nonnegative, nondecreasing concave welfare function
with maximum curvature $c\in[0,1]$ can be represented as a nonnegative linear combination 
over the set of $(c,k)$-coverage functions with $k=1,\dots,n$.

\begin{lemma}
\label{lem:nonnegative_lincomb}
Let $W:\naturals\to\reals$ denote a nonnegative, nondecreasing concave function
with curvature less than or equal to $c\in[0,1]$.
Then, the nonnegative coefficients $\eta_1,\dots,\eta_n$ satisfy
\begin{equation}
    W(x) = \sum^n_{k=1} \eta_k \cdot V^c_k(x), \quad \forall x=0,1,\dots,n,
\end{equation}
where $\eta_1 := [ 2 W(1) - W(2) ]/c$,
$\eta_k := [ 2 W(k) - W(k-1) - W(k+1) ]/c$, for $k=2,\dots,n-1$,
and $\eta_n := W(1) - \sum^{n-1}_{k=1} \eta_k$.
\end{lemma}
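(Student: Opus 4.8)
The plan is to recover $W$ by matching \emph{discrete second differences} against the basis $\{V^c_k\}_{k=1}^n$, exploiting the fact that each coverage function is piecewise linear with a single, well-localized kink. Writing $\Delta^2 f(j) := f(j+1) - 2f(j) + f(j-1)$ for the discrete second difference, I would first record the elementary geometry of the basis. Since $V^c_k(x) = x$ for $x \le k$ and $V^c_k(x) = (1-c)x + ck$ for $x \ge k$, the function $V^c_k$ has slope $1$ up to $k$ and slope $1-c$ thereafter; consequently $\Delta^2 V^c_k(j) = -c$ when $j = k$ and $\Delta^2 V^c_k(j) = 0$ whenever $j \neq k$, for every $j \in \{1,\dots,n-1\}$. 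Moreover $V^c_n(x) = x$ for all $x \in \{0,\dots,n\}$, so on the domain of interest $V^c_n$ is affine and contributes nothing to any second difference. The upshot is that at each index $j \in \{1,\dots,n-1\}$ the \emph{only} basis function with nonzero second difference is $V^c_j$ itself, which is precisely what makes the coefficients decouple.

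With this in hand, I would establish the identity by studying the discrepancy $D(x) := W(x) - \sum_{k=1}^n \eta_k V^c_k(x)$. By the localization above, $\Delta^2\big(\sum_k \eta_k V^c_k\big)(j) = -c\,\eta_j$ for $j = 1,\dots,n-1$, so the prescribed values $\eta_j = [2W(j) - W(j-1) - W(j+1)]/c = -\Delta^2 W(j)/c$ are exactly those forcing $\Delta^2 D(j) = 0$ for all $j = 1,\dots,n-1$. Since $W(0) = 0$ and $V^c_k(0) = 0$ for every $k$, one has $D(0) = 0$; and since $V^c_k(1) = 1$ for every $k \ge 1$, the choice $\eta_n = W(1) - \sum_{k=1}^{n-1}\eta_k$ yields $\sum_k \eta_k V^c_k(1) = \sum_k \eta_k = W(1)$, hence $D(1) = 0$. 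A function on $\{0,\dots,n\}$ with vanishing second differences is affine, and an affine function vanishing at both $0$ and $1$ is identically zero, so $D \equiv 0$ and the claimed decomposition holds for all $x = 0,1,\dots,n$.

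It remains to verify nonnegativity, which is where the curvature hypothesis enters and which I expect to be the main obstacle. For $j = 1,\dots,n-1$ the identity $\eta_j = -\Delta^2 W(j)/c$ is immediately nonnegative, since concavity of $W$ gives $\Delta^2 W(j) \le 0$ and $c > 0$. The delicate coefficient is $\eta_n$. Here I would telescope: writing $d_k = W(k) - W(k-1)$, one has $\sum_{k=1}^{n-1}\big(-\Delta^2 W(k)\big) = d_1 - d_n = W(1) - (W(n) - W(n-1))$, so that $\sum_{k=1}^{n-1}\eta_k = \frac{1}{c}\big(W(1) - (W(n) - W(n-1))\big)$. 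The curvature bound $1 - (W(n)-W(n-1))/W(1) \le c$ (valid since $W(1) > 0$) rearranges to $W(n) - W(n-1) \ge (1-c)W(1)$, whence $\sum_{k=1}^{n-1}\eta_k \le \frac{1}{c}\big(W(1) - (1-c)W(1)\big) = W(1)$ and therefore $\eta_n = W(1) - \sum_{k=1}^{n-1}\eta_k \ge 0$. This closes the argument; the boundary case $c = 0$ is degenerate, as then $W$ is forced to be affine and equal to $W(1)\cdot V^0_n$, and would be treated separately.
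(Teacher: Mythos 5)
Your proof is correct and complete, but it takes a genuinely different verification route from the paper's. The paper dispatches this lemma in three lines: it declares nonnegativity of the coefficients ``straightforward to verify'' and defers the identity $W(x)=\sum_{k=1}^n \eta_k V^c_k(x)$ to the proof of Corollary~1 (Appendix~B), which establishes it by a direct telescoping computation --- splitting $\sum_k \eta_k W^{(k)}(x)$ into the terms with $k<x$ and $k\geq x$ and simplifying --- after the specialization $W^{ub}(x)=x$, $W^{lb}(x)=V^c_1(x)$, under which the candidates $W^{(k)}$ reduce to $V^c_k$. You instead exploit the single-kink structure of the basis: the localization $\Delta^2 V^c_k(j)=-c$ exactly when $j=k$ (with $V^c_n$ affine on $\{0,\dots,n\}$) makes the coefficients decouple, so the choice $\eta_j=-\Delta^2 W(j)/c$ annihilates every second difference of the discrepancy $D$, and the boundary conditions $D(0)=D(1)=0$ (the latter using $V^c_k(1)=1$ and the definition of $\eta_n$) force $D\equiv 0$. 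Your route is more transparent and fills two gaps the paper leaves implicit: you actually prove $\eta_n\geq 0$, via the telescoping identity $\sum_{k=1}^{n-1}\eta_k=\bigl[W(1)-(W(n)-W(n-1))\bigr]/c\leq W(1)$, which is the one and only place the curvature hypothesis enters --- something the paper never makes explicit --- and you flag the degenerate case $c=0$, where the stated coefficients divide by zero and $W$ is forced to be linear, a case the paper silently ignores. The trade-off is generality: the paper's telescoping computation works verbatim for arbitrary candidate pairs $(W^{ub},W^{lb})$ in Corollary~1, whose candidates $W^{(k)}$ do \emph{not} have localized second differences, so your decoupling argument would not extend to that setting; for this lemma in isolation, however, your argument is the cleaner and more rigorous of the two.
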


\begin{proof}
    The proof is presented in Appendix~A.3.
\end{proof}

%%%%%%%%%%%%%%%%%%%%%%%%%%%%%%%%%%%%%%%%%%%%%%%%%%%%%%%%%%%%%%%%%%%%%%%%%%%

\vspace*{.25cm}\noindent{\bf Part iii).}
We begin by describing a utility mechanism parameterized by the maximum curvature
and maximum number of players.
Let $\mc{G}$ denote the set of resource allocation games
induced by all nonnegative, nondecreasing concave 
functions with maximum curvature $c\in[0,1]$
with a maximum of $n$ players .
Consider any resource allocation game $G \in \mc{G}$
and assign the following local utility function to each $r\in\mc{R}$: 
\[ F_r(x) = \sum^n_{k=1} \eta_k \cdot F^c_k(x), \quad \forall x=1,\dots,n, \]
where $\eta_1 := [ 2 W_r(1) - W_r(2) ]/c$,
$\eta_k := [ 2 W_r(k) - W_r(k-1) - W_r(k+1) ]/c$, for $k=2,\dots,n-1$,
and $\eta_n := W_r(1) - \sum^{n-1}_{k=1} \eta_k$,
$W_r:\naturals\to\reals$ is the welfare function on the resource $r$
and each function $F^c_k:\naturals\to\reals$, $k=1,\dots,n$, is the optimal local utility function for $V^c_k(x)$ 
defined recursively in Lemma~\ref{thm:coverage_solution}.
In this part, we show that $\poa(G)\geq 1-c/e$ holds
for this utility mechanism.

Given maximum curvature $c\in[0,1]$, Lemma~\ref{thm:coverage_solution} proves that among the $(c,k)$-coverage functions 
with $k=1,\dots,n$, the $(c,1)$-coverage function has best achievable price of anarchy $1-c/e$ which is strictly 
lower than the best achievable price of anarchy for any $(c,k)$-coverage function with $k>1$.
This implies that the best achievable price of anarchy must satisfy $\poa(\mc{G}) \leq 1-c/e$, 
since any game $G$ in the set of resource allocation games induced by the $(c,1)$-coverage function must also be 
in the set $\mc{G}$, i.e., $G \in \mc{G}$, and there is at least one such game with $\poa(G)=1-c/e$.
We now show that $\poa(\mc{G})\geq 1-c/e$ also holds.
Recall from Lemma~\ref{lem:nonnegative_lincomb} that 
the nonnegative coefficients $\eta_1,\dots,\eta_n$ defined above satisfy 
\[ W_r(x) = \sum^n_{k=1} \eta_k \cdot V^c_k(x) \quad \forall x=0,1,\dots,n. \]
It must then hold that, for any $r\in\mc{R}$, $(F_r, (1-c/e)^{-1})$ is a feasible point 
in the linear program in Equation~(12) (see Appendix~A)
for any $n$ and the corresponding $W_r$.
Observe that each constraint in the linear program must be satisfied since, 
by Lemma~\ref{lem:nonnegative_lincomb},
it can be represented as a nonnegative linear combination of the constraints in the 
$n$ linear programs for $V^c_k$ and $(F^c_k, (1-c/e)^{-1})$, $k=1,\dots,n$, i.e.,
for all $r\in\mc{R}$ and all $(x,y,z) \in \mc{I}(n)$ it must hold that
\begin{equation*}
\begin{aligned}
    (1-c/e&)^{-1} W_r(x) \geq \sum^n_{k=1} \eta_k \cdot \left[ 1-c\cdot \frac{k^k e^{-k}}{k!} \right] V^c_k(x) \\
                 \geq\> &\sum^n_{k=1} \eta_k \cdot \left[ V^c_k(y) + (x-z) F^c_k(x) - (y-z) F^c_k(x+1) \right] \\
                 =\> &W_r(y) + (x-z) F_r(x) - (y-z) F_r(x+1),
\end{aligned}
\end{equation*}
where the first inequality holds because $1-c/e \leq 1-c\cdot k^k e^{-k}/(k!)$ for all $k\geq 1$ and
since $W_r$, $V^c_k(x)$, $k=1,\dots,n$, and the coefficients $\eta_1,\dots,\eta_n$ are nonnegative,
and the second inequality holds by the result in Lemma~\ref{thm:coverage_solution}.

%%%%%%%%%%%%%%%%%%%%%%%%%%%%%%%%%%%%%%%%%%%%%%%%%%%%%%%%%%%%%%%%%%%%%%%%%%%

\subsection{Specialized sets of welfare functions}
In the previous subsection, we used a series of arguments
to prove the bound on the price of anarchy in Theorem~\ref{thm:informal}.
Informally, we considered a specified set of candidate welfare functions.
For this set of candidate welfare functions, we derived a corresponding 
set of local utility functions that maximize the price of anarchy.
Finally, we showed that the best achievable price of anarchy for these candidates
is automatically a lower bound on the best achievable price of anarchy 
across a much broader set of welfare functions.
A set of candidate welfare functions must be chosen for two reasons:
(i)~an optimal local utility function and its corresponding optimal price of anarchy 
can be obtained in advance for each of the candidate welfare functions; and, more importantly,
(ii)~any function within the set of welfare functions of interest can be expressed 
as a nonnegative linear combination over the set of candidate welfare functions,
thus inheriting the same optimal price of anarchy.
Clearly the choice of candidate functions is important, 
as the \emph{a priori} guarantees on the 
price of anarchy is characterized by the best achievable price of anarchy 
corresponding to each candidate.

As our next result, we outline a mechanism for obtaining a 
set of candidate functions for a given set of welfare functions $\mc{W}$
such that any function $W\in\mc{W}$ can be expressed as a nonnegative
linear combination over the candidate functions.
This generalizes the approach taken in the previous subsection
to sets of resource allocation games for which more is known about the welfare
functions than concavity and maximum curvature $c\in[0,1]$.

\begin{corollary}
\label{cor:generalized_candidate}
Let $\mc{W}$ denote a set of nonnegative, nondecreasing concave welfare functions 
and $n$ be the maximum number of agents.
Let $W^{\rm ub}$ and $W^{\rm lb}$ be two nonnegative, nondecreasing concave 
functions that satisfy the following for all $W\in\mc{W}$:
(i)~$W^{lb}(x+1)-W^{lb}(x) \leq [W(x+1)-W(x)]/W(1) \leq W^{ub}(x+1)-W^{ub}(x)$, 
for all $x=1,\dots,n-1$; and,
(ii)~$[W(x+1)-2W(x)+W(x-1)]/W(1) \leq W^{ub}(x+1)-2W^{ub}(x)+W^{ub}(x-1) \leq W^{lb}(x+1)-2W^{lb}(x)+W^{lb}(x-1)$,
for all $x=2,\dots,n-1$.
Finally, define the candidate functions $W^{(k)}$, $k=1,\dots,n$, as follows:
\begin{equation}
    W^{(k)}(x) = \begin{cases}
        W^{ub}(x)                         \quad &\text{if } 1\leq x\leq k, \\
        W^{ub}(k) + W^{lb}(x) - W^{lb}(k) \quad &\text{if } x>k.
    \end{cases}
\end{equation}

Then, for any welfare function $W\in\mc{W}$,
there exist nonnegative coefficients $\eta_1,\dots,\eta_n$ that satisfy
\[ W(x) = \sum^n_{k=1} \eta_k \cdot W^{(k)}(x), \quad \forall x=0,1,\dots,n. \]
\end{corollary}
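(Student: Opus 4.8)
The plan is to pass to first differences and solve the resulting system explicitly, then verify nonnegativity using the sandwiching hypotheses (i) and (ii). Throughout write $d(j) := W(j) - W(j-1)$, $d^{ub}(j) := W^{ub}(j) - W^{ub}(j-1)$ and $d^{lb}(j) := W^{lb}(j) - W^{lb}(j-1)$ for $j = 1, \dots, n$; by the scale-invariance of the hypotheses (they are normalized by $W(1)$) I may assume $W(1) = 1$, and I adopt the convention $W^{ub}(0) = W^{lb}(0) = W^{(k)}(0) = 0$ together with $W^{ub}(1) = 1$ (matching the instance $W^{ub}(x) = x$, $W^{lb}(x) = (1-c)x$ underlying Lemma~\ref{lem:nonnegative_lincomb}). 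Since every function involved vanishes at $0$, the claimed identity $W = \sum_k \eta_k W^{(k)}$ on $\{0,\dots,n\}$ is equivalent to the increment identities $d(j) = \sum_{k=1}^n \eta_k\, d^{(k)}(j)$ for $j = 1, \dots, n$, and a direct inspection of the definition of $W^{(k)}$ gives $d^{(k)}(j) = d^{ub}(j)$ for $j \leq k$ and $d^{(k)}(j) = d^{lb}(j)$ for $j > k$.

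First I would solve this system. Introducing the tail sums $E_j := \sum_{k=j}^n \eta_k$, the increment identity at index $j$ collapses to $d(j) = d^{lb}(j)\, E_1 + [d^{ub}(j) - d^{lb}(j)]\, E_j$, since the indices $k \geq j$ contribute $d^{ub}(j)$ while the indices $k < j$ contribute $d^{lb}(j)$. The case $j = 1$ gives $E_1 = d(1)/d^{ub}(1) = 1$, and for $j \geq 2$ I solve $E_j = [d(j) - d^{lb}(j)]/[d^{ub}(j) - d^{lb}(j)]$ (the gap $d^{ub}(j) - d^{lb}(j)$ being positive, the degenerate case of a vanishing gap corresponding to $c=0$ in Lemma~\ref{lem:nonnegative_lincomb} and being excluded). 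Recovering the coefficients as $\eta_k = E_k - E_{k+1}$ for $k < n$ and $\eta_n = E_n$ then exhibits the unique coefficients realizing the identity, so it only remains to show each $\eta_k \geq 0$.

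The nonnegativity is the crux, and I expect it to be the main obstacle. The boundary coefficients are immediate: hypothesis (i), which reads $d^{lb}(j) \leq d(j) \leq d^{ub}(j)$ for $j = 2, \dots, n$, gives $E_j \in [0,1]$ for every $j \geq 2$, whence $\eta_n = E_n \geq 0$ and $\eta_1 = E_1 - E_2 = 1 - E_2 \geq 0$. For an interior index $j \in \{2, \dots, n-1\}$ I would prove the monotonicity $E_j \geq E_{j+1}$ directly. Writing $p := d(j) - d^{lb}(j)$ and $q := d^{ub}(j) - d^{lb}(j) > 0$, hypothesis (i) yields $0 \leq p \leq q$ and $E_j = p/q$; advancing one index gives $E_{j+1} = (p + s)/(q + t)$, where $s := [d(j+1)-d(j)] - [d^{lb}(j+1)-d^{lb}(j)]$ and $t := [d^{ub}(j+1)-d^{ub}(j)] - [d^{lb}(j+1)-d^{lb}(j)]$. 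Here hypothesis (ii), applied at $x = j$, orders the three second differences so that $s \leq t \leq 0$. Cross-multiplying the desired inequality $p/q \geq (p+s)/(q+t)$ (both denominators positive) reduces it to $p\,t \geq s\,q$, which follows from $s \leq t \leq 0$ and $0 \leq p \leq q$ through the chain $s q \leq t q \leq t p$. This establishes $E_1 \geq E_2 \geq \dots \geq E_n \geq 0$, hence $\eta_k \geq 0$ for all $k$.

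The one point demanding care is the normalization $W^{ub}(1) = 1$ (equivalently $E_1 = 1$): the hypotheses constrain only the increments and second differences of $W^{ub}$ and $W^{lb}$, so I would make explicit that $W^{ub}$ agrees with $W$ at $x = 1$, mirroring the role of $W^{ub}(x) = x$ in Lemma~\ref{lem:nonnegative_lincomb}; without it the relative-position bounds $E_j \in [0,1]$ and the telescoping $\eta_1 = 1 - E_2$ would fail to line up. As a sanity check, I would verify that the entire computation specializes to Lemma~\ref{lem:nonnegative_lincomb} upon setting $d^{ub} \equiv 1$ and $d^{lb} \equiv 1-c$, which recovers both the coefficient formula and the sign analysis there.
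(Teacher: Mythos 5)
Your proof is correct and takes essentially the same route as the paper: your tail sums $E_j = [d(j)-d^{lb}(j)]/[d^{ub}(j)-d^{lb}(j)]$ are exactly $1$ minus the paper's cumulative sums $\sum_{k=1}^{j-1}\eta_k$, so you recover the identical coefficients, and your nonnegativity argument (hypothesis (i) for the boundary coefficients, hypothesis (ii) via the cross-multiplied inequality $pt \geq qs$ for the interior ones) is the same computation the paper performs by bounding the difference of consecutive fractional partial sums. Your explicit flagging of the anchoring $W^{ub}(1)=W(1)$ is a fair observation rather than a defect: the paper's proof relies on it just as tacitly (its choice $\eta_n = 1-\sum_{k=1}^{n-1}\eta_k$ only matches $W$ at $x=1$ under this normalization, as in the instantiation $W^{ub}(x)=x$ used for Lemma~\ref{lem:nonnegative_lincomb}), and both proofs equally share the nondegeneracy caveat $d^{ub}(j)>d^{lb}(j)$.
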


\begin{proof}
    The proof is in Appendix~B.
\end{proof}

We highlight several important implications of the result in Corollary~\ref{cor:generalized_candidate} 
in the following discussion: % Theorem~\ref{thm:general_candidate_poa}:

\vspace*{.25cm}\noindent (i)~We showed in Part iii) of the previous subsection that any set 
of resource allocation games $\mc{G}$ induced by nonnegative linear combinations over a set of candidate 
functions $W^{(1)},\dots,W^{(n)}$ automatically inherits the optimal price of anarchy guarantees 
of the candidates, i.e., there exist local utility functions such that $\poa(\mc{G})$ is greater than 
or equal to the lowest optimal price of anarchy among the candidates.
Thus, by simply precomputing the optimal local utility functions $F^{(1)},\dots,F^{(n)}$ 
and price of anarchy bounds corresponding to the candidate functions,
one obtains a lower bound on the best achievable price of anarchy in the set of games considered.
This can be done, for example, 
using the linear programming based methodology proposed in~\cite{chandan2019optimal}.

\vspace*{.25cm}\noindent (ii)~If the candidate function with lowest corresponding optimal 
price of anarchy happens to be a member of the underlying set $\mc{W}$,
then we can also say that this lower bound is the best achievable price of anarchy.
Furthermore, an optimal utility mechanism then consists of computing nonnegative linear 
combination over the precomputed functions $F^{(1)},\dots,F^{(n)}$.

\vspace*{.25cm}\noindent (iii)~The complexity of computing the local utility functions that 
achieve the lower bound on $\poa(\mc{G})$ is polynomial in the number of players.
This follows from observing that the functions $F^{(1)},\dots,F^{(n)}$ can be precomputed 
and there is a closed-form expression for the nonnegative coefficients $\eta_k$, $k=1,\dots,n$, 
given a welfare function $W \in \mc{W}$ (see, e.g., the proof of Corollary~\ref{cor:generalized_candidate}).

\section{Simulation Results}
\label{section:simulation}

In this section, we provide
an in-depth simulation example
in which we compare the equilibrium performance
corresponding to the universal utility mechanism we derive in the previous section for $c=1$
against two well-studied utility structures from the literature:
the \emph{identical interest utility} and the \emph{equal shares utility mechanism}.
The identical interest utility precisely aligns
the players' utilities to the system objective, 
i.e., $U_i(a)=W(a)$ for all $i\in N$.
Observe that under this utility, if $U_i(a_i,a_{-i}) > U_i(a'_i,a_{-i})$ for a player $i\in N$,
then it must hold that $W(a_i,a_{-i}) > W(a'_i,a_{-i})$.
As its name suggests, the equal shares utility mechanism distributes the welfare obtained 
on each resource among the players selecting that resource
which corresponds with local utility functions of the form $F^{\rm es}_r(x)=W_r(x)/x$ for all $r\in\mc{R}$.
At first glance, one might expect that one of these two 
utilities would be best, e.g.,
the identical interest utility exposes the players 
to the actual system objective.
However, in terms of the worst-case equilibrium efficiency,
our simulation provides concrete evidence that 
the universal utility mechanism performs better.

Consider a vehicle-target assignment problem with $n=10$ vehicles and $|\mc{T}|=n+1$ targets,
where $\mc{T}=\{t_1,\dots,t_{n+1}\}$. 
We purposely choose a small number of vehicles (i.e., $n=10$) in order to allow for explicit computation of the 
optimal allocation and, therefore, of the corresponding price of anarchy.
Each vehicle $i\in N$ has two singleton target assignments chosen randomly from a uniform distribution 
over the $n+1$ targets, i.e., $\mc{A}_i=\{\{t_j\},\{t_k\}\}$ where $j,k \sim \mc{U}\{1,n+1\}$.
Each target $t\in\mc{T}$ has welfare function $W_t(x)=v_t \cdot (1-(1-p)^x)$
where $v_t$ is drawn from a uniform distribution over the interval $[0,1]$ and $p\in[0,1]$
is a given parameter.

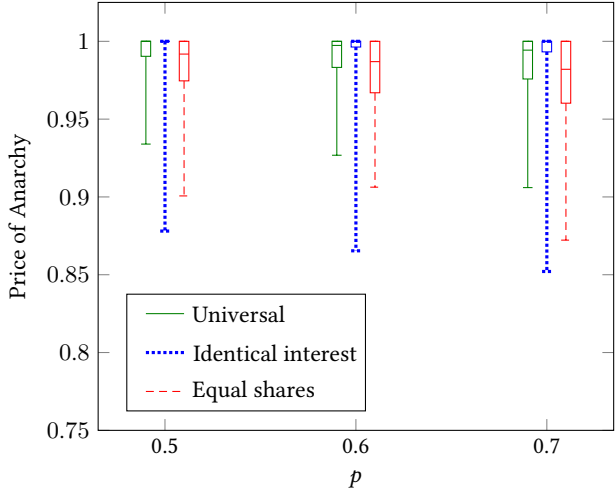
\begin{figure}[t]
\begin{tikzpicture}
    \begin{axis}[
        boxplot/draw direction=y,
        xtick={5,15,25},
        xticklabels={0.5,0.6,0.7},
        xlabel={$p$},
        ylabel={Price of Anarchy},
        ymin=0.75,
    ]
        \draw[] (axis cs:3,0.7625) rectangle (axis cs:15.5,0.8375);
        \draw[green!50!black]                   (axis cs:4,0.825) -- (6,0.825) node[right,black]{Universal};
        \draw[blue,densely dotted,very thick]   (axis cs:4,0.8)   -- (6,0.8)   node[right,black]{Identical interest};
        \draw[red,densely dashed]               (axis cs:4,0.775) -- (6,0.775) node[right,black]{Equal shares};
        % F constructed
        \addplot+[
            boxplot prepared={
                draw position=4,
                median=1.,
                upper quartile=1.,
                lower quartile=0.99035963,
                upper whisker=1.,
                lower whisker=0.93396285,
                box extend=0.5,  % height of box
                whisker extend=0.5, % height of whiskers
                every box/.style={draw=green!50!black},
                every whisker/.style={green!50!black},
                every median/.style={green!50!black},
            }
        ] coordinates {};
        \addplot+[
            boxplot prepared={
                draw position=14,
                median=0.99736221,
                upper quartile=1.,
                lower quartile=0.98323658,
                upper whisker=1.,
                lower whisker=0.92678849,
                box extend=0.5,  % height of box
                whisker extend=0.5, % height of whiskers
                every box/.style={draw=green!50!black},
                every whisker/.style={green!50!black},
                every median/.style={green!50!black},
            }
        ] coordinates {};
        \addplot+[
            boxplot prepared={
                draw position=24,
                median=0.9943302,
                upper quartile=1.,
                lower quartile=0.97579496,
                upper whisker=1.,
                lower whisker=0.90599849,
                box extend=0.5,  % height of box
                whisker extend=0.5, % height of whiskers
                every box/.style={draw=green!50!black},
                every whisker/.style={green!50!black},
                every median/.style={green!50!black},
            }
        ] coordinates {};
        % F mc
        \addplot+[
            boxplot prepared={
                draw position=5,
                median=1.,
                upper quartile=1.,
                lower quartile=1.,
                upper whisker=1.,
                lower whisker=0.87806205,
                box extend=0.5,  % height of box
                whisker extend=0.5, % height of whiskers
                every box/.style={draw=blue},
                every whisker/.style={very thick,densely dotted,blue},
                every median/.style={blue},
            }
        ] coordinates {};
        \addplot+[
            boxplot prepared={
                draw position=15,
                median=1.,
                upper quartile=1.,
                lower quartile=0.99632867,
                upper whisker=1.,
                lower whisker=0.8653793,
                box extend=0.5,  % height of box
                whisker extend=0.5, % height of whiskers
                every box/.style={draw=blue},
                every whisker/.style={very thick,densely dotted,blue},
                every median/.style={blue},
            }
        ] coordinates {};
        \addplot+[
            boxplot prepared={
                draw position=25,
                median=1.,                  % median
                upper quartile=1.,          % 75th percentile
                lower quartile=0.99322341,  % 25th percentile
                upper whisker=1.,           % max
                lower whisker=0.85209072,   % min
                box extend=0.5,  % height of box
                whisker extend=0.5, % height of whiskers
                every box/.style={draw=blue,solid},
                every whisker/.style={very thick,densely dotted,blue},
                every median/.style={solid,blue},
            }
        ] coordinates {};
        % F es
        \addplot+[
            boxplot prepared={
                draw position=6,
                median=0.99180013,
                upper quartile=1.,
                lower quartile=0.97458916,
                upper whisker=1.,
                lower whisker=0.90065523,
                box extend=0.5,  % height of box
                whisker extend=0.5, % height of whiskers
                every box/.style={draw=red,solid},
                every whisker/.style={densely dashed,red},
                every median/.style={solid,red},
            }
        ] coordinates {};
        \addplot+[
            boxplot prepared={
                draw position=16,
                median=0.98694877,
                upper quartile=1.,
                lower quartile=0.96693901,
                upper whisker=1.,
                lower whisker=0.90627167,
                box extend=0.5,  % height of box
                whisker extend=0.5, % height of whiskers
                every box/.style={draw=red,solid},
                every whisker/.style={densely dashed,red},
                every median/.style={solid,red},
            }
        ] coordinates {};
        \addplot+[
            boxplot prepared={
                draw position=26,
                median=0.98202759,                  % median
                upper quartile=1.,          % 75th percentile
                lower quartile=0.96023658,  % 25th percentile
                upper whisker=1.,           % max
                lower whisker=0.87223827,   % min
                box extend=0.5,  % height of box
                whisker extend=0.5, % height of whiskers
                every box/.style={draw=red,solid},
                every whisker/.style={densely dashed,red},
                every median/.style={solid,red},
            }
        ] coordinates {};
    \end{axis}
\end{tikzpicture}
\caption{Box plots depicting the equilibrium efficiency measured across $T=10^3$ instances 
for the universal utility, identical interest utility and equal shares utility mechanisms in
the vehicle-target assignment problem with $p_t=p$ for all $t\in\mc{T}$ and $p\in\{0.5,0.6,0.7\}$.
Note that among the three utility mechanisms studied, 
the price of anarchy is highest for the universal utility mechanism.}
\label{fig:simulation}
\end{figure}

Within the scenario described above, we model agent decision making as best response dynamics over $T=100$ iterations.
More specifically, the agents best respond in a round robin fashion to the actions of the others, i.e.,
at each time step $t\in\{1,\dots,T\}$, the agent $i=t \mod n$ selects an action $a^t_i \in \mc{A}_i$
such that $U_i(a^t_i,a^{t-1}_{-i})=\max_{a_i \in \mc{A}_i} U_i(a_i,a^{t-1}_{-i})$, and then
$a^t=(a^t_i,a^{t-1}_{-i})$.
As the agents settled to a pure Nash equilibrium within $20$ iterations in all the instances 
we generated, repeating over $T=100$ iterations is justified.
We ran our simulations for the three utility structures described 
(i.e., universal utility mechanism, identical interest utility and equal shares utility mechanism) 
over $10^3$ randomly generated instances for $p\in\{0.5,0.6,0.7\}$, as described above.
The price of anarchy data was obtained by dividing the welfare at equilibrium by the 
best achievable welfare computed by exhaustive search.
The box plots in Figure~\ref{fig:simulation} display statistics on the price of anarchy values we obtain 
in our simulations.
These box plots are to be interpreted as follows:
(i)~the top and bottom of the boxes correspond to the 75-th and 25-th percentiles of the price of anarchy,
respectively;
(ii)~the top and bottom ``whiskers'' show the maximum and minimum price of anarchy, respectively; and,
(iii)~each of the boxes is bisected by the median value of the corresponding prices of anarchy.

Observe that for all three values of $p$ considered, the minimum price of anarchy across the $10^3$ randomly generated 
instances is highest for the universal utility mechanism, as expected.
However, for all three utility functions considered, the maximum and 75-th percentile
of the price of anarchy data collected is always at $1$, i.e.,
the best response dynamics settled on an optimal allocation for at least 25\% of the randomly generated instances.  
In fact, all of the other statistics on the price of anarchy are skewed away from the minimum,
suggesting that the worst-case instances are quite rare.
Furthermore, although the minimum price of anarchy for the identical interest utility is lowest for $p\in\{0.5,0.6,0.7\}$,
the identical interest utility also has the highest median and 25-th percentile price of anarchy values
among the three utilities considered.
These observations suggest that -- as one might expect -- the price of anarchy is not representative of the average equilibrium efficiency,
and that the identical interest utility could perform better than the universal utility mechanism in this respect.
The design of utility functions that maximize the expected equilibrium efficiency
could be a fruitful direction for future work.

\section{Conclusions and Open Questions}
\label{section:conclusion}

In this work, we consider the game theoretic approach 
to the design of distributed algorithms for resource allocation
problems with nonnegative, nondecreasing concave welfare functions.
Our main result is that there exist utility mechanisms 
that achieve a price of anarchy $1-c/e$ in resource allocation games
with nonnegative, nondecreasing concave welfare functions with
maximum curvature $c\in[0,1]$.
In cases where the maximum curvature is not known, 
the guarantee corresponding to $c=1$ still applies.
Furthermore, we show that the local utility functions can be
computed in polynomial time as nonnegative linear combination
over a restricted set of functions with explicit expressions.

In the example we studied in Section~\ref{section:results},
we observed that the price of anarchy achieved by the universal utility mechanism is near-optimal
within sets of games induced by specialized welfare sets.
Considering the gains in tractability and generality when using this mechanism,
this small decrease in equilibrium efficiency guarantees may be acceptable.
Future work should characterize the difference between the
price of anarchy achieved by the universal utility mechanism 
and the best achievable price of anarchy within the set of games 
induced by a given set of welfare functions.

We observed that, in certain cases, the price of anarchy guarantees that we obtain match 
the best-achievable approximation ratios among polynomial-time centralized 
algorithms~\cite{barman2020tight,sviridenko2017optimal}.
An investigation into the potential connections between
the best achievable price of anarchy in resource allocation games
and the best achievable approximation ratio among polynomial-time centralized algorithms
would reflect on the relative performance of distributed and centralized
multiagent coordination algorithms.

Since the price of anarchy is a measure for the worst-case equilibrium efficiency within a family of instances,
it may not be representative of the expected performance of a distributed algorithm designed using the game theoretic approach.
This is demonstrated, for example, by the simulation results studied in Section~\ref{section:simulation}.
A relevant research direction is the design of player utility functions with the objective 
of maximizing the expected equilibrium efficiency.

%%%%%%%%%%%%%%%%%%%%%%%%%%%%%%%%%%%%%%%%%%%%%%%%%%%%%%%%%%%%%%%%%%%%%%%%

%%%%%%%%%%%%%%%%%%%%%%%%%%%%%%%%%%%%%%%%%%%%%%%%%%%%%%%%%%%%%%%%%%%%%%%%

%%% The acknowledgments section is defined using the "acks" environment
%%% (rather than an unnumbered section). The use of this environment 
%%% ensures the proper identification of the section in the article 
%%% metadata as well as the consistent spelling of the heading.

\begin{acks}
This work is supported by \texttt{ONR Grant \#N00014-20-1-2359}
and \texttt{AFOSR Grant \#FA9550-20-1-0054}.
This research was, in part, funded by the U.S. Government. 
The views and conclusions contained in this document are those of the 
authors and should not be interpreted as representing the official policies, 
either expressed or implied, of the U.S. Government.
\end{acks}

%%%%%%%%%%%%%%%%%%%%%%%%%%%%%%%%%%%%%%%%%%%%%%%%%%%%%%%%%%%%%%%%%%%%%%%%

%%% The next two lines define, first, the bibliography style to be 
%%% applied, and, second, the bibliography file to be used.

\bibliographystyle{ACM-Reference-Format} 
\bibliography{references}

%%%%%%%%%%%%%%%%%%%%%%%%%%%%%%%%%%%%%%%%%%%%%%%%%%%%%%%%%%%%%%%%%%%%%%%%

\newpage
\appendix

\section{Proof of Theorem~\ref{thm:informal}}
\label{app:informal}

The proof relies on a result in Chandan~et~al.~\cite{chandan2019optimal},
which we detail in the following proposition for the reader's convenience:
\begin{proposition}[Thm.~6~\cite{chandan2019optimal}]
\label{prop:linear_programming_based_approach}
Consider the set of resource allocation games with a maximum of $n$ players
induced by local welfare functions $W^1, \dots, W^m$.
Let $(F^j,\rho^j)$, $j=1,\dots,m$, be solutions to $m$ linear programs of the form\footnote{
    $\mc{I}(n)$ is defined as the set of all triplets $(x,y,z) \in \{0,1,\dots,n\}^3$ that satisfy:
    (i)~$1\leq x+y-z\leq n$ and $z\leq \min\{x,y\}$;
    and, (ii)~$x+y-z=n$ or $(x-z)(y-z)z=0$.
}
\begin{equation} \label{eq:original_linprog}
\begin{aligned}
    \min_{F, \rho} \quad & \rho \\
    \text{s.t.} \quad &
        W^j(y) - \rho W^j(x) + (x-z)F(x) - (y-z)F(x+1) \leq 0, \\
    & \hspace*{125pt} \forall (x,y,z) \in \mc{I}(n).
\end{aligned}
\end{equation}

Then, the local utility functions $F^1, \dots, F^m$ maximize the price of anarchy
and the corresponding price of anarchy is 
\( \poa(\mc{G}) = \min_{j\in\{1,\dots,m\}} \frac{1}{\rho^j} . \)
\end{proposition}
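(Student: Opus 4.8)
The plan is to establish this characterization by a primal--dual (``smoothness'') argument, proving the two inequalities $\poa(\mc{G}) \geq \min_j 1/\rho^j$ and $\poa(\mc{G}) \leq \min_j 1/\rho^j$ separately. The linear program in \eqref{eq:original_linprog} should be read as the search, over both a candidate local utility function $F$ and a scalar $\rho$, for the tightest \emph{per-resource} certificate relating the welfare produced at an optimum to the welfare produced at an equilibrium on a single resource, parameterized by the configuration $(x,y,z)$ of how players overlap there. Feasibility of such a certificate will yield a price-of-anarchy bound valid for \emph{every} equilibrium, whereas optimality of $\rho$ together with a matching worst-case instance will show the bound is exact and that the minimizing utilities cannot be beaten.

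For the lower bound I would fix an arbitrary $G \in \mc{G}$, a pure Nash equilibrium $\nash{a}$, and an optimal allocation $\opt{a}$, then sum the equilibrium condition \eqref{eq:nash_condition} over all players after substituting the candidate deviation $a_i = \opt{a}_i$. The essential bookkeeping step is to regroup the resulting double sum resource by resource: for each $r$, set $x = |\nash{a}|_r$, $y = |\opt{a}|_r$, and let $z$ be the number of players selecting $r$ in both $\nash{a}$ and $\opt{a}$. Since a unilateral deviation toward $\opt{a}_i$ leaves the count on a shared resource at $x$ but raises it to $x+1$ on a newly selected one, the aggregated equilibrium inequality collapses to $\sum_r [(x_r - z_r) F_r(x_r) - (y_r - z_r) F_r(x_r+1)] \geq 0$. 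Adding the per-resource constraints of \eqref{eq:original_linprog} with the common value $\rho = \max_j \rho^j$ (which stays feasible for each $W^j$ because every welfare function is nonnegative, so $-\rho\,W^j(x) \leq -\rho^j W^j(x)$) and then invoking this aggregated inequality yields $W(\opt{a}) \leq \rho\, W(\nash{a})$, hence $\poa(G) \geq 1/\rho = \min_j 1/\rho^j$.

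The second direction is harder and requires two ingredients. First, I must verify that the triples ranging over $\mc{I}(n)$ are exactly the resource configurations realizable in a genuine $n$-player game: condition (i) records that $x+y-z$ counts the distinct players on the resource and must lie in $\{1,\dots,n\}$ with $z \leq \min\{x,y\}$, while condition (ii) is a reduction asserting that only the ``boundary'' configurations (all $n$ players involved, or one of the three overlap categories empty) are binding and hence suffice as constraints. Second, and this is the main obstacle, for $j^\star \in \argmax_j \rho^j$ I must exhibit a family of games induced by $W^{j^\star}$ whose price of anarchy approaches $1/\rho^{j^\star}$; the natural route is to dualize \eqref{eq:original_linprog}, read the optimal dual variables as target frequencies with which each configuration $(x,y,z)$ should occur, and stitch these frequencies into an explicit instance with integer player counts and consistent action sets. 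Realizing the dual solution as an honest combinatorial game, rather than a fractional relaxation, is the delicate step, and it does double duty: it certifies $\poa(\mc{G}) \leq 1/\rho^{j^\star}$ and simultaneously establishes optimality of the $F^j$, since any utility mechanism with strictly larger price of anarchy would make $(\,\tilde F, 1/\poa\,)$ a feasible point of strictly smaller objective on this same instance, contradicting the optimality of $\rho^{j^\star}$.
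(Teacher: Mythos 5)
First, a framing point: the paper does not prove this proposition at all --- it is imported verbatim as Theorem~6 of Chandan~et~al.~\cite{chandan2019optimal} and used as a black box in Appendix~A. Your attempt can therefore only be compared against the argument in that reference, and your architecture does mirror it: the direction $\poa(\mc{G}) \geq \min_j 1/\rho^j$ is exactly the smoothness-style aggregation of the Nash conditions under deviations to $\opt{a}_i$, regrouped resource by resource, and the matching tightness plus optimality of the $F^j$ is obtained there by dualizing the linear program and converting a dual optimal solution into an explicit worst-case instance. So the plan is the right one; two points, however, need repair.

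The first is a genuine slip in your lower bound. Your claim that the triples in $\mc{I}(n)$ ``are exactly the resource configurations realizable in a genuine $n$-player game'' is false: with $n=5$, a resource can realize $(x,y,z)=(2,2,1)$, which satisfies condition~(i) but violates condition~(ii), so the linear program contains no constraint for it --- yet your aggregation step invokes precisely that constraint. The fix is the observation that, for fixed $(x,y)$, the left-hand side of the constraint is affine in $z$, so its value at any $z$ in the feasible interval $[\max\{0,x+y-n\},\,\min\{x,y\}]$ is a convex combination of its values at the two endpoints; and at the endpoints either $x+y-z=n$ or $(x-z)(y-z)z=0$ holds, so both endpoint constraints do lie in $\mc{I}(n)$ and are each $\leq 0$. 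You gesture at this reduction, but you assign it to the upper-bound direction, where it is not what is needed; it is the lower-bound aggregation that breaks without it. The second point is that the step you yourself flag as the main obstacle --- realizing a dual optimal solution as an honest game whose worst equilibrium attains $1/\rho^{j^\star}$, which is also what legitimizes your inference that a competing mechanism $\tilde F$ would yield a feasible point $(\tilde F, 1/\poa)$, since feasibility of that point is equivalent to the exactness of the LP characterization for fixed $\tilde F$ --- is only named, never carried out. Without it you have established $\poa(\mc{G}) \geq \min_j 1/\rho^j$ but neither tightness nor optimality of the $F^j$. As a reconstruction of the cited proof your proposal is faithful in outline; as a proof it is incomplete at exactly its load-bearing step.
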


%%%%%%%%%%%%%%%%%%%%%%%%%%%%%%%%%%%%%%%%%%%%%%%%%%%%%%%%%%%%%%%%%%%%%%%%%%%

\subsection{Proof of Lemma~\ref{thm:coverage_solution}}
\label{app:coverage_solution}

\begin{proof}
We first dispense with the situation where $n \leq \beta$.
In this case, the local welfare function is identical to $W(x)=x$,
and thus the price of anarchy is 1 for choice of $F(x)=W(x)/x$.
For the remainder of the proof, we only consider $n > \beta$.

The remainder of the proof is structured as follows:
(i)~we introduce a relaxation of the linear program in Equation~(\ref{eq:original_linprog});
(ii)~in this relaxed linear program, 
we determine what are the most restrictive constraints for each $x\in\{1,\dots,n-1\}$;
(iii)~we show that a feasible solution to the relaxed linear program is nonincreasing, i.e., 
$F(x+1)\leq F(x)$, for every $x\in\{1,\dots,n-1\}$ such that $F(x)>1-\alpha$,
and $F(x+1)=1-\alpha$ otherwise;
(iv)~we show that $(F,\rho)$ as defined in the claim is a solution to the relaxed linear program 
for $n\to \infty$;
and (v)~we observe that $(F,\rho)$ as defined in the claim is feasible in the linear program
in Equation~(\ref{eq:original_linprog}) and thus a solution to this linear program as well.

\vspace*{.25cm}\noindent\emph{Relaxed linear program.}
First we consider a relaxation of the linear program in Equation~(\ref{eq:original_linprog}).
In this relaxed linear program, only the constraints where $z=\min\{0,x+y-n\}$ and $x,y \in \{0,\dots,n\}$ are retained.
Finally, we exclude the constraint with $y=0$, for all $x\in\{1,\dots,n-1\}$, resulting in the following
\emph{relaxed linear program}:

\begin{equation} \label{eq:relaxed_linprog}
\begin{aligned}
    &\max_{F \in \reals^n, \rho \in \reals} \> \rho \quad \text{subject to:} \\
    & W(y)-\rho W(x)+\min\{x,n-y\}F(x)-\min\{y,n-x\}F(x+1) \leq 0, \\
    & \hspace*{75pt} \forall(x,y) \in \{0,\dots,n\}\times\{1,\dots,n\} \cup (n,0).
\end{aligned}
\end{equation}

\vspace*{.25cm}\noindent\emph{Tightest constraints on $\rho$.}
We characterize what value $y \in \{1,\dots,n\}$ parameterizes the tightest constraint 
for each $x\in\{1,\dots,n-1\}$.
For any $x=1,\dots,n-1$, if $1 \geq F(x), F(x+1) \geq 1-\alpha$,
we observe that the constraint with $y=\beta$ is strictest.
For $y< \beta$, it holds that
\begin{align*}
    \rho W(x) \geq\> & \beta+\min\{x,n-\beta\}F(x)-\min\{\beta,n-x\}F(x+1) \\
              \geq\> & y+\min\{x,n-y\}F(x)-\min\{y,n-x\}F(x+1),
\end{align*}
where the final inequality holds 
when $x\leq n-\beta$ because $\beta-y \geq (\beta-y)F(x+1)$;
when $n-\beta< x \leq n-y$ because $\beta-y-(x+\beta-n)F(x) \geq n-x-y \geq (n-x-y)F(x+1)$ since $x+\beta-n>0$;
and when $n-y<x$ because $\beta-y\geq (\beta-y)F(x)$.
For constraints with $y> \beta$,
\begin{align*}
    & \rho W(x) \\ \geq\> & \alpha \beta+(1-\alpha)\beta+\min\{x,n-\beta\}F(x)-\min\{\beta,n-x\}F(x+1) \\
                   \geq\> & \alpha \beta+(1-\alpha)y+\min\{x,n-y\}F(x)-\min\{y,n-x\}F(x+1),
\end{align*}
where the final inequality holds 
when $x\leq n-y$ because $(y-\beta)F(x+1) \geq (y-\beta)(1-\alpha)$,
when $n-y<x\leq n-\beta$ because $(1-\alpha)(\beta-y)+(x+y-n)F(x)\geq(1-\alpha)(\beta+x-n)\geq(\beta+x-n)F(x+1)$ since $x+y-n>0\geq \beta+x-n$,
and when $n-\beta<x$ because $(1-\alpha)(\beta-y)\geq(\beta-y)F(x)$.

For any $x=1,\dots,n-1$, if $F(x) \geq 1-\alpha\geq F(x+1)$ and $n-x\geq \beta$,
then the constraint with $y=n-x$ is strictest among all constraints as,
for any $y \neq n-x$, it holds that
\begin{align*}
    & \rho W(x)\\  \geq\> & \alpha \beta+(1-\alpha)(n-x)+xF(x)-(n-x)F(x+1) \\
            \geq\> & \alpha \beta+(1-\alpha)y+\min\{x,n-y\}F(x)-\min\{y,n-x\}F(x+1) \\
            \geq\> & W(y)+\min\{x,n-y\}F(x)-\min\{y,n-x\}F(x+1),
\end{align*}
where the inequality holds because
$(1-\alpha)(n-x-y)\geq (n-x-y)F(x+1)$ when $x\leq n-y$
and $(1-\alpha)(n-x-y)\geq (n-x-y)F(x)$ when $x>n-y$.
For any $x=1,\dots,n-1$, if $F(x) \geq 1-\alpha\geq F(x+1)$ and $n-x < \beta$, 
then $y=\beta$ is strictest as for any $y \neq \beta$, it holds that
\begin{align*}
    \rho W(x) \geq\> & \beta+\min\{x,n-\beta\}F(x)-\min\{\beta,n-x\}F(x+1) \\
            \geq\> & W(y)+\min\{x,n-y\}F(x)-\min\{y,n-x\}F(x+1),
\end{align*}
where $\beta-y+(n-\beta-x)F(x)\geq (\beta-y)[1-F(x)] +(n-x-y)F(x) \geq (n-x-y)F(x+1)$ when $x\leq n-y$ since $y\leq n-x < \beta$,
$(\beta-y)[1-F(x)]\geq 0$ when $x > n-y$ and $n-x< y\leq \beta$,
and $(1-\alpha)(\beta-y)+(y-\beta)F(x)\geq 0$ when $y>\beta$ since $F(x)\geq 1-\alpha$.
For any $x=1,\dots,n-1$, if $F(x+1), F(x)\leq 1-\alpha$,
% {\color{red} Appears that we only need that $F(x) \leq 1-\alpha$, and that $F(x+1) \leq 1-\alpha$.}
then the constraint with $y=n$ is strictest among all constraints as,
for any $y<n$, it holds that
\begin{align*}
    & \rho W(x) \\ \geq\> & \alpha \beta+(1-\alpha)n+\min\{x,0\}F(x)-\min\{n,n-x\}F(x+1) \\
            \geq\> & \alpha \beta+(1-\alpha)y+\min\{x,n-y\}F(x)-\min\{y,n-x\}F(x+1) \\
            \geq\> & W(y)+\min\{x,n-y\}F(x)-\min\{y,n-x\}F(x+1),
\end{align*}
where the second last inequality holds because 
$(n-y)[1-\alpha-F(x+1)] \geq x[F(x)-F(x+1)]$ when $x\leq n-y$
and $(n-y)(1-\alpha)\geq (n-y)F(x)$ when $x>n-y$.

Thus, if $F(x)\geq F(x+1)\geq 1-\alpha$, it is sufficient to consider only the constraint 
with $y=\beta$ and $z=\max\{0,x+\beta-n\}$.
If $F(x)\geq 1-\alpha\geq F(x+1)$ and $n-x > \beta$, it is sufficient to consider only the constraint
with $y=n-x$ and $z=0$.
Otherwise, if $F(x)\geq 1-\alpha\geq F(x+1)$ and $n-x< \beta$, 
then we consider only the constraint $y=\beta$ and $z=\max\{0,x+y-n\}$.
Finally, if $1-\alpha \geq F(x) \geq F(x+1)$, then the constraint with $y=n$ and $z=x$ is the strictest.

\vspace*{.25cm}\noindent\emph{Proof that a solution to Equation~(\ref{eq:relaxed_linprog}) has $F$ `nonincreasing'.}
For this portion of the proof, consider a function $F$ defined for
any given $\rho>1$ as follows:
$F(1) = W(1)$ and, for all $x\in\{1,\dots,n-1\}$,
\[ F(x+1) = \max_{y\in\{1,\dots,n\}} \frac{\min\{x,n-y\}F(x) - W(x) \rho + W(y)}{\min\{y,n-x\}}. \]
For conciseness, we will use the shorthand
$\kappa_x=\frac{\min\{x,n-y^*\}}{\min\{y^*,n-x\}}$,
$\lambda_x= \frac{W(x)}{\min\{y^*,n-x\}}$ and
$\mu_x=\frac{W(y^*)}{\min\{y^*,n-x\}}$ where $y^*\in\{1,\dots,n\}$ maximizes the above expression
for each $x\in\{1,\dots,n\}$.
Thus, $F(x+1)=\kappa_x F(x)-\lambda_x \rho+\mu_x$ for each $x\in\{1,\dots,n\}$.

We assume, by contradiction, that we are given $\rho$ such that $F$ is increasing at some index, 
i.e., $F(x+1)>F(x)$ for $x\in\{1,\dots,n\}$. 
Lemma~\ref{lem:f_continues_increasing} in Appendix~\ref{app:f_continues_increasing} shows that, if this is the case, 
then $F$ must continue to increase, so that $F(n) > F(n-1)$.
We wish to show the following:
(i)~if $F$ first increases at a point $x\in\{1,\dots,n-1\}$ where $F(x)>1-\alpha$, 
this leads to a contradiction for the value of $\rho$;
and, (ii)~if $F$ first increases at a point $x$ where $F(x)\leq 1-\alpha$, 
then either $F(j)\leq 1-\alpha$ for all $j\geq x$ is feasible or $(F,\rho)$ is infeasible.
It is important to note that the value $nF(n)/W(n)$ must be bounded, otherwise
\[ \rho \geq \max_{y \in \{0,\dots,n\}} \frac{W(y)+(n-y)F(n)}{W(n)} \geq \frac{nF(n)}{W(n)} \to \infty. \]
This is a contradiction as the price of anarchy will be at least $0.5$,
even if we use the marginal contribution utility~\cite{vetta2002nash}.
Since we are optimizing for the price of anarchy (i.e., $1/\rho$), 
we need only consider values of $\rho$ no greater than $2$.

Observe that if $F$ first increases at some point $x\in\{1,\dots,n\}$ such that $F(x)>1-\alpha$, 
then $F(n) > F(n-1) > 1-\alpha$ and
\begin{align*}
    \rho &= \max_{y \in \{0,\dots,n\}} \frac{W(y)+(n-y)F(n)}{W(n)} \\
         &\geq \max_{y \in \{1,\dots,n\}} \frac{W(y)+(n-y)F(n)}{W(n)} \\
         &= \max_{y \in \{1,\dots,n-1\}} \frac{W(y)+(n-y)F(n)}{W(n)} \\
         &> \max_{y \in \{1,\dots,n-1\}} \frac{W(y)+(n-y)F(n-1)}{W(n)},
\end{align*}
where the first inequality holds because we reduce the domain of maximization,
the second equality holds because $W(n-1)+F(n) > W(n)$ since $n>\beta$ and $F(n)>1-\alpha$,
and the final inequality holds because $F(n) > F(n-1)$.
Since $y\leq \beta \leq n-1$ corresponds with the strictest constraints when $F(n), F(n-1) \geq 1-\alpha$,
we can substitute 
\[ \max_{y \in \{1,\dots,n-1\}} [W(y)+(n-y)F(n-1)] = F(n)+\rho W(n-1) \] 
in the former bound on $\rho$ to get $\rho > [F(n)+\rho W(n-1)] / W(n)$.
Since $n>\beta$, this implies that
\begin{align*}
    \begin{cases}
        F(n) < 0 \quad &\text{if } \alpha=1, \\
        \rho > \frac{F(n)}{W(n)-W(n-1)} = \frac{F(n)}{1-\alpha}\quad &\text{if } \alpha\in[0,1)
    \end{cases}
\end{align*}
For $\alpha=1$ this is a contradiction, since we have that $F(n)>1-\alpha=0$.
For the remaining $\alpha\in[0,1)$, we want to prove that this 
also gives rise to a contradiction.
To do so, we show that 
\[ \frac{F(n)}{1-\alpha} \geq \max_{y \in \{0,\dots,n\}} \frac{W(y)+(n-y)F(n)}{W(n)} = \rho. \]
Observe that $y=\beta$ maximizes the right-hand side if $1-\alpha<F(n)\leq 1$,
and $y=0$ maximizes the right-hand side if $F(n) > 1$.
For $F(n)\in (1-\alpha,1]$ and $y=\beta$, it holds that
\begin{align*}
        & \frac{F(n)}{1-\alpha}-\frac{W(\beta)+(n-\beta)F(n)}{W(n)} \geq 0\\
    \iff& F(n) W(n) - (1-\alpha)[W(\beta)+(n-\beta)F(n)] \geq 0\\
    \impliedby& F(n)\frac{W(n)-W(\beta)}{n-\beta} = (1-\alpha)F(n)\\
\end{align*}
where the first and second line are equivalent because $\alpha\in[0,1)$
and the third line implies the second because $F(n)>1-\alpha$.
The final inequality holds because $[W(n)-W(\beta)]=(n-\beta)(1-\alpha)$ and $n>\beta$.
For $F(n) > 1$ and $y=0$, it holds that
\begin{align*}
    & \frac{F(n)}{1-\alpha}-\frac{nF(n)}{W(n)} \geq 0
\end{align*}
since $W(n)/n \geq 1-\alpha$ by definition.
Thus, in the above reasoning, we have shown that, if $F$ first increases at 
a point $x\in\{1,\dots,n\}$ and $F(x)>1-\alpha$, then, 
if $\alpha=1$, it holds that $1-\alpha<F(x)<\dots<F(n)<1-\alpha$;
and, if $\alpha\in[0,1)$, it holds that
\[ \rho > \frac{F(n)}{1-\alpha} \geq \max_{y \in \{0,\dots,n\}} \frac{W(y)+(n-y)F(n)}{W(n)} = \rho, \]
which is a contradiction.

Now we consider the scenario where we are at a point $x$ such that $F(x)\leq 1-\alpha$ 
and $F$ is monotonically nonincreasing before $x$.
We show that 
either selecting $F(x)=\dots=F(n)=1-\alpha$ is feasible for $\rho$ 
or that the value $\rho$ is infeasible.
We first consider the case where the strictest constraint on the value of $F(x+1)$ has $y\leq n-x$
and show that $F(x+1)$ cannot be greater than $F(x)\leq 1-\alpha$.
In the proof of Lemma~\ref{lem:f_continues_increasing} in Appendix~\ref{app:f_continues_increasing}, 
we showed that if $y\leq n-x$, $F(x+1)>F(x)$ and $F(x)\leq F(x-1)$,
then it must hold that $F(x) > [W(x)-W(x-1)] \rho \geq 1-\alpha$.
As we have assumed $F(x) \leq 1-\alpha$, it must be that $F(x+1) \leq F(x)\leq 1-\alpha$ if $y\leq n-x$.
We complete our reasoning for the case when $y>n-x$ corresponds to the strictest constraint 
on the value of $F(x+1)$.
We showed above that if $F(x) \leq 1-\alpha$ and $x>n-y$,
then the strictest constraint is parameterized by $y=n$.
For any $x\geq \beta$, it must hold that $F(x+1)\geq-W(x) \rho/(n-x) + W(n)/(n-x)$.
Since $-W(x) \rho/(n-x) + W(n)/(n-x) \leq 1-\alpha$ the constraint is satisfied for choice of $F(x+1) \leq 1-\alpha$.
Else, if $x<\beta$, since $\beta<n$, $F(x+1)>1-\alpha$ implies that $F(n)>F(n-1)>1-\alpha$, since $n-1 \geq \beta$.
We already proved above that this scenario leads to a contradiction on the value of $\rho$.
Repeating this reasoning for all $j>x$ such that $F(x)\leq 1-\alpha$, we argue that $F(j)= 1-\alpha$ is feasible.
Since the strictest constraint for each $F(j), F(j+1) \leq 1-\alpha$ has $y=n$, 
there is no recursion and the optimal value $\rho$ has no dependence on the values of $F(x), \dots, F(n)$,
even if it begins increasing.
We have also shown that the lower bound on $F$ is lower than or equal to $1-\alpha$ for any feasible $\rho$,
and so, $F$ with $F(j)=1-\alpha$, for all $j\in\{x,\dots,n\}$, must be feasible.

For any feasible $\rho$, we have successfully shown that $F(x)$ must be nonincreasing when it is greater than $1-\alpha$,
and that $F(x)=\dots=F(n)=1-\alpha$ is feasible otherwise.
This concludes this part of the proof.

\vspace*{.25cm}\noindent\emph{Proof that $(F,\rho)$ solves Equation~(\ref{eq:relaxed_linprog}).}
We begin by showing that $(F,\rho)$ as defined in the claim 
are feasible.
For $x=0$, the constraints in Equation~(\ref{eq:relaxed_linprog}) read as
$F(1) \geq W(y) / \min\{y,n-x\}$, for all $y=1,\dots,n$, which is satisfied for $F(1)=W(1)$.
Now consider $(x,y) \in\{1,\dots,n-1\} \times \{1,\dots,n\}$.
In the above reasoning, we showed that a feasible $(F,\rho)$ within Equation~(\ref{eq:relaxed_linprog}) 
will have $F$ nonincreasing while $F(x)>1-\alpha$ and $F(x)=1-\alpha$ otherwise.
Furthermore, we showed that when $F(x)\geq F(x+1)> 1-\alpha$ or when 
$F(x)\geq 1-\alpha\geq F(x+1)$ and $n-x< \beta$, then the strictest constraint has $y=\beta$.
Observe that
$\kappa_x = \min\{x,n-\beta\}/\min\{\beta,n-x\}$, $\lambda_x = W(x)/\min\{\beta,n-x\}$ 
and $\mu_x = \beta/\min\{\beta,n-x\}$ correspond with the recursive definition 
of $F(x+1)$.

We showed above that $F(x+1)=1-\alpha$ when $\kappa_x F(x) - \lambda_x \rho + \mu_x \leq 1-\alpha$,
is feasible as long as $\rho$ is feasible, since the values of $F$ less than or equal to $1-c$
have no impact on the optimal value of the relaxed linear program.
Consider the expression for $\rho$ that can be obtained
by completing the recursion as follows, 
\[ 1-\alpha = F(\hat x+1) \geq \Pi^{\hat x}_{u=1} \kappa_u F(1) 
            + \sum^{\hat x-1}_{u=1} (\Pi^{\hat x}_{v=u+1} \kappa_v) (\mu_u-\lambda_u \rho)
            + \mu_{\hat x} - \lambda_{\hat x} \rho. \]
Rearranging this expression, we obtain,
\begin{align*}
    \rho &\geq \frac{\Pi^{\hat x}_{u=1} \kappa_u F(1) + \sum^{\hat x-1}_{u=1} (\Pi^{\hat x}_{v=u+1} \kappa_v)\mu_u+ \mu_{\hat x} + \alpha -1}
                    {\sum^{\hat x-1}_{u=1} (\Pi^{\hat x}_{v=u+1} \kappa_v) \lambda_u + \lambda_{\hat x}}.
\end{align*}

Observe that for $n \to \infty$, $\min\{x,n-\beta\} = x$ and $\min\{\beta,n-x\}=\beta$.
Thus, the above expression for $\rho$ simplifies to
\begin{align*}
    \rho &\geq \frac{\frac{\hat x!}{\beta^{\hat x}}+\sum^{\hat x-1}_{u=1} \frac{\hat x!}{j!}\frac{1}{\beta^{\hat x-j}}+1+\alpha-1}
                    {\sum^\beta_{j=1} \frac{\hat x!}{j!\beta^{\hat x-j}}\frac{j}{\beta}+\sum^{\hat x-1}_{j=\beta+1}\frac{\hat x!}{j!\beta^{\hat x-j}}\frac{\alpha \beta+(1-\alpha)j}{\beta}} \\
            &= \frac{1+\sum^{\hat x-1}_{j=1}\frac{\beta^j}{j!}+\alpha \frac{\beta^{\hat x}}{\hat x!}}
                    {\sum^{\beta-1}_{j=0}\frac{\beta^j}{j!}+\sum^{\hat x-1}_{j=\beta+1}\frac{\beta^j}{j!}\frac{\alpha \beta+(1-\alpha)j}{\beta} } 
                    = \frac{e^\beta}{e^\beta-\alpha \frac{\beta^\beta}{\beta!}}.
\end{align*}
Noting that $\poa = 1/\rho$ concludes this part of the proof.

\vspace*{.25cm}\noindent\emph{Feasibility of $(F,\rho)$ in Equation~(\ref{eq:original_linprog}).}
To conclude the proof, we simply observe that since $F(x)$ is nonincreasing for all $x$,
the strictest constraints in the linear program in Equation~(\ref{eq:original_linprog})
correspond with the choice of $z=\min\{0,x+y-n\}$.
Thus, since $(F,\rho)$ is a solution to the relaxed linear program 
and feasible in the original linear program,
it must also be a solution to the original.
\end{proof}

%%%%%%%%%%%%%%%%%%%%%%%%%%%%%%%%%%%%%%%%%%%%%%%%%%%%%%%%%%%%%%%%%%%%%%%%%%%

\subsection{Proof of Lemma~\ref{lem:f_continues_increasing}}
\label{app:f_continues_increasing}

\begin{lemma}
\label{lem:f_continues_increasing}
Let $W$ be a nonnegative, nondecreasing concave function, 
and let $\rho \geq 1$ be a given parameter.
Further, define the function $F$ such that $F(1)=W(1)$ and
\begin{equation} \label{eq:fj_definition}
    F(j+1) := \max_{\ell \in \{1,\dots,n\}} 
        \frac{\min\{j,n-\ell\}F(j)-W(j)\rho+W(\ell)}
             {\min\{\ell,n-j\}},
\end{equation}
for all $j=1,\dots,n-1$.
Then, for the lowest value $\hat j=1,\dots,n-1$ such that $F(\hat j+1)>F(\hat j)$,
it must hold that $F(j+1)>F(j)$ for all $j=\hat j,\dots,n-1$.
\end{lemma}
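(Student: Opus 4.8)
The plan is to argue by induction on $j$, starting from the base index $\hat j$, that the strict increase propagates forward. First I would record the structural shape of the update: writing $m_1(\ell)=\min\{j,n-\ell\}$ and $m_2(\ell)=\min\{\ell,n-j\}$, the recursion
\[ F(j+1)=\max_{\ell\in\{1,\dots,n\}}\frac{m_1(\ell)F(j)-W(j)\rho+W(\ell)}{m_2(\ell)} \]
is a maximum of affine functions of $F(j)$ with nonnegative slopes $m_1(\ell)/m_2(\ell)\geq 0$, so $F(j+1)$ is nondecreasing in $F(j)$. Let $\ell^\star=\ell^\star(j)$ be a maximizer, so that $m_2(\ell^\star)F(j+1)=m_1(\ell^\star)F(j)-W(j)\rho+W(\ell^\star)$. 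This identity is the quantity I will feed into the next step of the recursion.

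For the inductive step I would strengthen the hypothesis to carry an auxiliary lower bound alongside monotonicity: for each $j\geq \hat j$, I claim both $F(j+1)>F(j)$ and $F(j+1)\geq [W(j+1)-W(j)]\rho$. The second inequality is what makes the recursion self-sustaining. To show $F(j+2)>F(j+1)$ I would substitute the feasible test value $\ell=\ell^\star(j)$ into the step-$(j+1)$ maximization and, after clearing denominators, reduce the target $F(j+2)>F(j+1)$ (in the regime where both minima are attained on their interior branches, $m_1\equiv j$ and $m_2\equiv\ell$) to
\[ (j+1)F(j+1)-jF(j) > [W(j+1)-W(j)]\rho. \]
Since $(j+1)F(j+1)-jF(j)=F(j+1)+j\,[F(j+1)-F(j)]>F(j+1)$ by the inductive increase, this follows at once from the auxiliary bound. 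The bound then regenerates itself, as $F(j+2)>F(j+1)\geq [W(j+1)-W(j)]\rho\geq [W(j+2)-W(j+1)]\rho$, where the last step uses concavity of $W$ (its increments are nonincreasing). Thus both halves of the strengthened hypothesis pass to $j+1$.

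For the base case I would exploit that $\hat j$ is the \emph{first} increase, i.e. $F$ is nonincreasing up to $\hat j$. Inspecting the maximizer at $\hat j$ and using concavity, I expect to recover the bound $F(\hat j)>[W(\hat j)-W(\hat j-1)]\rho$ (this is precisely the estimate quoted when this lemma is invoked in the main proof), whence $F(\hat j+1)>F(\hat j)>[W(\hat j)-W(\hat j-1)]\rho\geq[W(\hat j+1)-W(\hat j)]\rho$, seeding the induction at $j=\hat j$.

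The main obstacle is the boundary case analysis forced by the two $\min$ operators: the coefficients $m_1,m_2$ switch branches across $\ell=n-j$, and advancing from step $j$ to step $j+1$ shifts these kinks, so inserting $\ell=\ell^\star(j)$ into the step-$(j+1)$ update does not always reproduce the clean interior coefficients $j$ and $\ell$. I would handle the cases $\ell^\star\leq n-j-1$, $\ell^\star=n-j$, and $\ell^\star>n-j$ separately, in each picking either $\ell^\star$ or an adjacent index as the test value and using concavity of $W$ to absorb the discrepancy in the increments; the base-case estimate when the maximizer satisfies $\ell^\star>n-\hat j$, and the edge index $\hat j=1$ (where $W(\hat j-1)=W(0)=0$), would each need a short separate verification.
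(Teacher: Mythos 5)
Your proposal is correct and takes essentially the same route as the paper's proof: the paper's Part~(i) establishes precisely your seed estimate (a bound of the form $F(\hat j+1)>[W(\hat j)-W(\hat j-1)]\rho$, derived separately for $\ell^\star_{\hat j}\leq n-\hat j$ and $\ell^\star_{\hat j}\geq n-\hat j+1$), and its Part~(ii) carries out your inductive step by substituting the previous maximizer $\ell^\star_{j-1}$ as a test value in the next maximization, splitting into the three branch cases $\ell^\star_{j-1}<n-j+1$, $\ell^\star_{j-1}=n-j+1$, $\ell^\star_{j-1}>n-j+1$ that you anticipated, and closing each via concavity together with the seed bound --- your only cosmetic deviation is re-indexing the auxiliary bound at every step ($F(j+1)\geq[W(j+1)-W(j)]\rho$, regenerated by concavity) instead of invoking concavity against the fixed bound at $\hat j$. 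One small caveat: in the case $\ell^\star_{\hat j}\geq n-\hat j+1$ the paper's derivation yields the bound on $F(\hat j+1)$ rather than on $F(\hat j)$ as you stated, but that weaker form still seeds your induction unchanged.
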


\begin{proof}
The proof is presented in two parts as follows:
in part~(i), we identify an inequality that must hold given that $F(\hat j+1)>F(\hat j)$ for $1 \leq \hat j\leq n-1$
as defined in the claim;
and, in part~(ii), we use a recursive argument to prove that $F(j+1)>F(j)$ holds for all $\hat j+1\leq j\leq n-1$,
using the inequality we derived in part~(i).
\vspace*{.25cm}\noindent\emph{Part (i).}
We define $\ell^*_j$ as one of the arguments that minimizes the right-hand side of
Equation~(\ref{eq:fj_definition}) for each $x=1,\dots,n-1$.
By assumption, it must hold that $F(\hat j+1)>F(\hat j)$, which implies that
\begin{align*}
    F(\hat j) < \max_{1 \leq \ell \leq n} &
            \frac{\min\{\hat j,n-\ell\}}{\min\{\ell,n-\hat j\}} F(\hat j)
            - \frac{W(\hat j)}{\min\{\ell,n-\hat j\}} \rho
            + \frac{W(\ell)}{\min\{\ell,n-\hat j\}} \\
        = \max_{1 \leq \ell \leq n} &
            \frac{\min\{\hat j-1,n-\ell\}}{\min\{\ell,n-\hat j+1\}} F(\hat j-1)
            - \frac{W(\hat j-1)}{\min\{\ell,n-\hat j+1\}} \rho \\
        &   + \frac{W(\ell)}{\min\{\ell,n-\hat j+1\}} 
            + \frac{\min\{\hat j,n-\ell\}}{\min\{\ell,n-\hat j\}} F(\hat j) \\
        &   - \frac{\min\{\hat j-1,n-\ell\}}{\min\{\ell,n-\hat j+1\}} F(\hat j-1)
            - \frac{W(\hat j)}{\min\{\ell,n-\hat j\}} \rho \\
        &   + \frac{W(\hat j-1)}{\min\{\ell,n-\hat j+1\}} \rho
            + \frac{W(\ell)}{\min\{\ell,n-\hat j\}} \\
        &   - \frac{W(\ell)}{\min\{\ell,n-\hat j+1\}},
\end{align*}
where the strict inequality holds by definition of $F(\hat j+1)$.
Recall that
\begin{equation*}
\begin{aligned}
    F(\hat j) := \max_{1 \leq \ell \leq n}
&\frac{\min\{\hat j-1,n-\ell\}}{\min\{\ell,n-\hat j+1\}} F(\hat j-1)
- \frac{W(\hat j-1)}{\min\{\ell,n-\hat j+1\}} \rho \\
&+ \frac{W(\ell)}{\min\{\ell,n-\hat j+1\}}.
\end{aligned}
\end{equation*}
% \[  \]
% 
Thus, if $\ell^*_{\hat j} \leq n-\hat j$, 
the above strict inequality with $F(\hat j)$ can only be satisfied if
\[ F(\hat j+1) > F(\hat j) \geq \hat j F(\hat j) - (\hat j-1) F(\hat j-1) 
> [ W(\hat j) - W(\hat j-1) ] \cdot \rho. \]
Similarly, if $\ell^*_{\hat j} \geq n-\hat j+1$, then it must hold that
\begin{equation*}
\begin{aligned}
& (n-\ell^*_{\hat j}) \bigg[ \frac{F(\hat j)}{n-\hat j} - \frac{F(\hat j-1)}{n-\hat j+1} \bigg]
+ \bigg[ \frac{1}{n-\hat j} - \frac{1}{n-\hat j+1} \bigg] W(\ell_{\hat j}) \\
& > \bigg[ \frac{W(\hat j)}{n-\hat j} 
- \frac{W(\hat j-1)}{n-\hat j+1} \bigg] \cdot \rho \\
\implies & \bigg[ \frac{1}{n-\hat j} - \frac{1}{n-\hat j+1} \bigg] 
[(n-\ell^*_{\hat j}) F(\hat j) + W(\ell_{\hat j})] \\
& > \bigg[ \frac{W(\hat j)}{n-\hat j} 
- \frac{W(\hat j-1)}{n-\hat j+1} \bigg] \cdot \rho \\
\iff & F(\hat j+1)>[W(\hat j) - W(\hat j-1)] \rho,
\end{aligned}
\end{equation*}
where the first line implies the second line because $F(\hat j) \leq F(\hat j-1)$, 
by the definition of $\hat j$ in the claim,
and the second line is equivalent to the third by the definitions of 
$F(\hat j+1)$ and $\ell^*_{\hat j}$.
This concludes part~(i) of the proof.
\vspace{.2cm} \noindent {\bf Part~(ii).} 
In this part of the proof, we show by recursion that if $F(\hat j+1)>F(\hat j)$,
then $F(j+1)>F(j)$ for all $j=\hat j+1,...,n-1$.
We do so by showing that, if $F(j)>F(j-1)>\dots>F(\hat j+1)$ for any $\hat j+1 \leq j \leq n-1$,
then it must hold that $F(j+1)>F(j)$.
Thus, in the following reasoning, we assume that $\hat j+1 \leq j \leq n-1$,
and that $F(j)>F(j-1)>\dots>F(\hat j+1)$.
We begin with the scenario in which $\ell^*_{j-1} < n-j+1$,
which gives us that $\ell^*_{j-1} \leq n-j$.
Recall that
\begin{align*}
F(j+1) := \max_{1 \leq \ell_j \leq n}&
\frac{\min\{j,n-\ell_j\}}{\min\{\ell_j,n-j\}} F(j+1) 
- \frac{W(j)}{\min\{\ell_j,n-j\}} \rho \\
& + \frac{W(\ell_j)}{\min\{\ell_j,n-j\}}.
\end{align*}
Thus, it must hold that
\begin{align*}
& \quad F(j+1) \\
& = \max_{1 \leq \ell_j \leq n}
    \frac{\min\{j-1,n-\ell_j\}}{\min\{\ell_j,n-j+1\}} F(j-1)
    - \frac{W(j-1)}{\min\{\ell_j,n-j+1\}} \rho \\
& \hspace*{40pt}  + \frac{W(\ell_j)}{\min\{\ell_j,n-j+1\}} 
    + \frac{\min\{j,n-\ell_j\}}{\min\{\ell_j,n-j\}} F(j+1) \\
& \hspace*{40pt}  - \frac{\min\{j-1,n-\ell_j\}}{\min\{\ell_j,n-j+1\}} F(j-1)
    - \frac{W(j)}{\min\{\ell_j,n-j\}} \rho \\
& \hspace*{40pt}  + \frac{W(j-1)}{\min\{\ell_j,n-j+1\}} \rho
    + \frac{W(\ell_j)}{\min\{\ell_j,n-j\}} \\
& \hspace*{40pt}  - \frac{W(\ell_j)}{\min\{\ell_j,n-j+1\}}\\
&\geq F(j) + \frac{j}{\ell^*_{j-1}} F(j) - \frac{j-1}{\ell^*_{j-1}} F(j-1)
- \frac{W(j)-W(j-1)}{\ell^*_{j-1}} \rho \\
&> F(j) + \frac{1}{\ell^*_{j-1}} F(\hat j+1) - \frac{1}{\ell^*_{j-1}}[W(j)-W(j-1)] \rho \\
&> F(j),
\end{align*}
where the first inequality holds by evaluating the maximization at $\ell_j = \ell^*_{j-1}$,
the second inequality holds because $F(j)>F(j-1)$ and $F(j) \geq F(\hat j+1)$, by assumption, 
and the final inequality holds by the identity we showed in part~(i)
and because $W(\cdot)$ is concave.
Next, consider the scenario in which $\ell^*_{j-1} > n-j+1$.
Observe that
\begin{align*}
& \quad F(j+1) \\ &\geq F(j) 
+ (n-\ell^*_{j-1}) \bigg[ \frac{F(j)}{n-j} - \frac{F(j-1)}{n-j+1} \bigg] 
 \\
&\qquad + \bigg[ \frac{1}{n-j} - \frac{1}{n-j+1} \bigg] W(\ell^*_{j-1}) 
        - \frac{W(j)}{n-j} \rho + \frac{W(j-1)}{n-j+1} \rho \\
&> F(j)
+ \bigg[ \frac{1}{n-j} - \frac{1}{n-j+1} \bigg] 
[(n-\ell^*_{j-1}) F(j-1) + W(\ell^*_{j-1})] \\
&\qquad - \frac{W(j)}{n-j} \rho + \frac{W(j-1)}{n-j+1} \rho \\
&= F(j) 
+ \bigg[ \frac{1}{n-j}-\frac{1}{n-j+1} \bigg] [ (n-j+1)F(j) + W(j-1)\rho ]\\
&\qquad - \frac{W(j)}{n-j} \rho + \frac{W(j-1)}{n-j+1} \rho \\
&\geq F(j) + \frac{1}{n-j}F(\hat j+1) - \frac{1}{n-j} [W(j)-W(j-1)] \rho \\
&> F(j),
\end{align*}
where the first inequality holds by evaluating the maximization at $\ell_j = \ell^*_{j-1}$,
the second inequality holds because $F(j)>F(j-1)$, by assumption, 
the equality holds by the definitions of $F(j)$ and $\ell^*_{j-1}$,
the third inequality holds because $F(j)\geq F(\hat j+1)$, by assumption,
and the final inequality holds by the identity we showed in part~(i)
and because $W(\cdot)$ is concave.
Finally, we consider the scenario in which $\ell^*_{j-1} = n-j+1$.
Observe that
\begin{align*}
& \quad F(j+1) \\
&\geq F(j) + \frac{j-1}{n-j} F(j) - \frac{j-1}{n-j+1} F(j-1)
- \frac{W(j)}{n-j} \rho + \frac{W(j-1)}{n-j+1} \rho\\
&\qquad + \bigg[ \frac{1}{n-j} - \frac{1}{n-j+1} \bigg] W(n-j+1) \\
&> F(j)
+ \bigg[ \frac{1}{n-j} - \frac{1}{n-j+1} \bigg] 
[(n-\ell^*_{j-1}) F(j-1) + W(\ell^*_{j-1})] \\
&\qquad - \frac{W(j)}{n-j} \rho + \frac{W(j-1)}{n-j+1} \rho \\
&> F(j),
\end{align*}
where the first inequality holds by evaluating the maximization at $\ell_j = \ell^*_{j-1}$,
the second inequality holds because $F(j)>F(j-1)$, by assumption,
and the final inequality holds by the same reasoning as for $\ell^*_{j-1} > n-j+1$.
\end{proof}

%%%%%%%%%%%%%%%%%%%%%%%%%%%%%%%%%%%%%%%%%%%%%%%%%%%%%%%%%%%%%%%%%%%%%%%%%%%

\subsection{Proof of Lemma~\ref{lem:nonnegative_lincomb}}
\label{app:nonnegative_lincomb}

\begin{proof}
The proof is by construction.
Define coefficients $\eta_1 := [ 2 W(1) - W(2) ]/c$,
$\eta_j := [ 2 W(j) - W(j-1) - W(j+1) ]/c$, $j=2,\dots,n-1$,
and $\eta_n := W(1) - \sum_{j=1}^{n-1} \eta_j = W(1) - [ W(1) + W(n-1) - W(n) ]/c$.
It is straightforward to verify that $\eta_j \geq 0$ for all $k=1,\dots,n$
recalling that $W(0)=0$ and $W(x)$ is nonnegative, nondecreasing concave for $x\geq 0$.
We defer the proof that $W(x)=\sum_{k=1}^n \eta_k \cdot V^c_k(x)$ for all $x = 1,\dots,n$
to the proof of Corollary~\ref{cor:generalized_candidate},
where one need only substitute $W^{ub}(x)=x$ and $W^{lb}(x)=V^c_1(x)$, for $x\geq 0$.
\end{proof}

%%%%%%%%%%%%%%%%%%%%%%%%%%%%%%%%%%%%%%%%%%%%%%%%%%%%%%%%%%%%%%%%%%%%%%%%%%%

\section{Proof of Corollary~\ref{cor:generalized_candidate}.}
\label{app:generalized_candidate}

\begin{proof}
First, observe that there must exist functions $W^{ub}$ and $W^{lb}$.
Simply observe that $W^{ub}(x)=x$ and $W^{lb}(x)=V^1_1(x)=\min\{x,1\}$
are valid for any set of nonnegative, nondecreasing concave functions.

The rest of the proof follows by construction.
Define the coefficients $\eta_j$, $j=0,\dots,n$, as follows:
\[ \eta_1 = \frac{W^{ub}(2)-W^{ub}(1)-W(2)+W(1)}{W^{ub}(2)-W^{ub}(1)-W^{lb}(2)+W^{lb}(1)},\]
\[ \eta_j = \frac{W^{ub}(j+1)-W^{ub}(j)-W(j+1)+W(j)}{W^{ub}(j+1)-W^{ub}(j)-W^{lb}(j+1)+W^{lb}(j)} - \sum^{j-1}_{k=1} \eta_k, \]
for $j=2,\dots,n-1$ and $\eta_n = 1-\sum^{n-1}_{k=1} \eta_k$.

First, we prove that the coefficients $\eta_1, \dots, \eta_n$ are nonnegative.
It is simple to see that $\eta_1 \geq 0$ since $W^{ub}(2)-W^{ub}(1) \geq W(2)-W(1) \geq W^{lb}(2)-W^{lb}(1)$.
Similarly, $\eta_n \geq 0$ since $\eta_n = 1 - [ W^{ub}(n)-W^{ub}(n-1)-W(n)+W(n-1) ] / [ W^{ub}(n)-W^{ub}(n-1)-W^{lb}(n)+W^{lb}(n-1) ]$.
Finally, for any $j \in \{2, \dots, n-1\}$,
\begin{align*}
    \eta_j &= \frac{W^{ub}(j+1)-W^{ub}(j)-W(j+1)+W(j)}{ W^{ub}(j+1)-W^{ub}(j)-W^{lb}(j+1)+W^{lb}(j)} \\
            & \qquad - \frac{W^{ub}(j)-W^{ub}(j-1)-W(j)+W(j-1)}{W^{ub}(j)-W^{ub}(j-1)-W^{lb}(j)+W^{lb}(j-1)} \\
            &\geq \frac{W^{ub}(j+1)-2W^{ub}(j)+W^{ub}(j-1)}{W^{ub}(j)-W^{ub}(j-1)-W^{lb}(j)+W^{lb}(j-1)} \\
            & \qquad - \frac{W(j+1)-2W(j)W(j)+W(j-1)}{W^{ub}(j)-W^{ub}(j-1)-W^{lb}(j)+W^{lb}(j-1)} \geq 0,
\end{align*}
where the equality holds by definition, 
the first inequality holds because $W^{lb}(j+1)-2W^{lb}(j)+W^{lb}(j-1) \geq W^{ub}(j+1)-2W^{ub}(j)+W^{ub}(j-1)$
and the final inequality holds because $W^{ub}(j+1)-2W^{ub}(j)+W^{ub}(j-1) \geq W(j+1)-2W(j)+W(j-1)$.

We conclude the proof by observing that, for all $x=1,\dots,n$,
\begin{align*}
    &\quad \sum^{x-1}_{k=1} \eta_k W^{lb}(x) + \sum^{x-1}_{k=1} \eta_k [W^{ub}(k)-W^{lb}(k)] + \sum^n_{k=x} \eta_k W^{ub}(x) \\
    &= \frac{W^{ub}(x)-W^{ub}(x-1)-W(x)+W(x-1)}{W^{ub}(x)-W^{ub}(x-1)-W^{lb}(x)+W^{lb}(x-1)} W^{lb}(x) \\
    &\quad + \sum^{x-1}_{k=1}\eta_k [W^{ub}(k)-W^{lb}(k)] \\
    &\quad + \left[1-\frac{W^{ub}(x)-W^{ub}(x{-}1)-W(x)+W(x{-}1)}{W^{ub}(x)-W^{ub}(x{-}1)-W^{lb}(x)+W^{lb}(x{-}1)}\right] W^{ub}(x) \\
    &= W^{ub}(x) - W^{ub}(x)-W^{ub}(x-1)-W(x)+W(x-1) \\
    &\quad + \sum^{x-2}_{k=1} \eta_k [W^{ub}(k)-W^{lb}(k)-W^{ub}(x-1)+W^{lb}(x-1)] \\
    &= W(x) + \sum^{x-2}_{k=1} \eta_k [W^{ub}(k)-W^{lb}(k)-W^{ub}(x-1)+W^{lb}(x-1)] \\
    &\quad + W^{ub}(x-1) - W(x-1) \\
    &= W(x),
\end{align*}
where the final equality holds once the expression is simplified for the remaining $\eta_k$ values.
\end{proof}
% \newpage

\end{document}